\def\real{\mathbb{R}}
\def\integer{\mathbb{Z}}
\def\natural{\mathbb{N}}
\DeclareMathOperator{\E}{\mathbb{E}}
\newcommand{\subscr}[2]{#1_{\textup{#2}}}
\newcommand{\map}[3]{#1: #2 \rightarrow #3}
\newcommand{\union}{\operatorname{\cup}}
\newcommand\oprocendsymbol{\hbox{$\square$}}
\newcommand\oprocend{\relax\ifmmode\else\unskip\hfill\fi\oprocendsymbol}
\newcommand\bit[1]{\textit{\textbf{#1}}}
\newtheorem{theorem}{Theorem}
\newtheorem{lemma}{Lemma}
\newtheorem{remark}{Remark}
\def \bs {\boldsymbol}
\def \mc {\mathcal}
\def\cog{\mathrm{cog}}
\def\var{\mathrm{Var}}
\begin{document}

\begin{frontmatter}
\runtitle{Optimal fidelity selection for human-in-the-loop}  

\title{Structural Properties of Optimal Fidelity Selection Policies for Human-in-the-loop Queues \thanksref{footnoteinfo}} 

\thanks[footnoteinfo]{{A preliminary version of this work~\cite{gupta2019optimal} was presented at the 2019 American Control Conference, held in Philadelphia. We expand on the work in~\cite{gupta2019optimal} by establishing structural properties of the optimal value function and consequently the optimal fidelity selection policy. We also introduce additional numerical illustrations.} Corresponding author P. Gupta. Tel. +1-517-432-0019}

\author[MSU]{Piyush Gupta}\ead{guptapi1@msu.edu},  
\author[MSU]{Vaibhav Srivastava}\ead{vaibhav@egr.msu.edu}          

\address[MSU]{Department of Electrical and Computer Engineering, Michigan State University, East Lansing, Michigan, 48824, USA}  
          
\begin{keyword}                           
Fidelity selection;  Queueing theory; Human-in-the-loop; Semi-Markov decision process.               
\end{keyword}                             

\begin{abstract}                          
We study optimal fidelity selection for a human operator servicing a queue of homogeneous tasks. The agent can service a task with a normal or high fidelity level, where fidelity refers to the degree of exactness and precision while servicing the task. Therefore, high-fidelity servicing results in higher-quality service but leads to larger service times and increased operator tiredness. We treat the human cognitive state as a lumped parameter that captures psychological factors such as workload and fatigue. The operator's service time distribution depends on her cognitive dynamics and the fidelity level selected for servicing the task. Her cognitive dynamics evolve as a Markov chain in which the cognitive state increases with high probability whenever she is busy and decreases while resting. The tasks arrive according to a Poisson process and the operator is penalized at a fixed rate for each task waiting in the queue. We address the trade-off between high-quality service of the task and consequent penalty due to a subsequent increase in queue length
using a discrete-time Semi-Markov Decision Process framework. We numerically determine an optimal policy and the corresponding optimal value function. Finally, we establish the structural properties of an optimal fidelity policy and provide conditions under which the optimal policy is a threshold-based policy.
\end{abstract}

\end{frontmatter}

\section{Introduction}
Human-in-the-loop systems are pervasive in areas such as search and rescue, semi-autonomous driving, and robot-assisted surgery. Many safety-critical systems rely on human expertise to ensure safe and efficient operation. Human-robot collaboration allows for integrating human knowledge and perception skills with autonomy.
In such systems, it is often of interest to increase the ratio of robots to humans, which leads to a reduction in cost but an increase in human workload. This is detrimental to the system performance as human performance is a function of cognitive factors such as fatigue and workload. Therefore, in environments with constrained human resources, it is critical to facilitate the effective use of limited cognitive resources~\cite{KS-EF:10b}. In this work, we control the cognitive state of the human operator by optimizing the fidelity level for servicing the tasks, where fidelity refers to the degree of exactness and precision while servicing the task.\\
We study optimal fidelity selection for a human operator servicing a queue of homogeneous tasks. An example scenario is an airport security system where a human scans the luggage items with different fidelity levels. The term ``fidelity" can have different meanings based on the application. For example, in shared-control tasks such as collaborative human-robot search~\cite{nourbakhsh2005human},
fidelity could refer to the human contribution to the task as compared to autonomy. Similarly, in a dual-task paradigm such as supervising and teleoperating a team of robots~\cite{yuan2017evaluating}, servicing single versus both tasks can correspond to different fidelity levels. We incorporate human cognitive dynamics into the fidelity selection problem and study its influence on optimal policy. In particular, we show that servicing the tasks with high fidelity is not always optimal due to larger service times and increased tiredness of the human operator. In fact, we show that the optimal policy depends on the number of tasks awaiting service (queue length) as well as the cognitive state of the human operator. Our results provide insight into the efficient design of human decision support systems. \\
For servicing each task, the human operator receives a reward based on the fidelity level selected for the task. However, with higher fidelity, the cognitive state quickly rises to higher sub-optimal levels, thereby requiring larger service time for subsequent tasks. Hence, there is a trade-off between the reward obtained by high-fidelity servicing (improved service quality), and the penalty incurred due to the resulting delay in servicing subsequent tasks. We elucidate this trade-off and find an optimal fidelity selection policy. Indeed the optimal policy is problem-specific and depends on the problem parameters. Therefore, without careful system design and parameter tuning such as selecting arrival rates, the optimal policy might behave unexpectedly. This can lead to a bad user experience for the human operator or a lack of trust in the optimal recommendations, for example, in a scenario where the decision-support system recommends frequent switching of the fidelity level. To this end, we establish structural properties~\cite{agarwal2008structural} of the optimal fidelity selection policy and provide conditions under which, for each cognitive state, there exist thresholds on queue lengths at which optimal policy switches fidelity levels. These structural properties can be used to tune the decision support system parameters such that the optimal policy is well-behaved and the human operator can trust its suggestions. Furthermore, these properties can be leveraged to determine a minimally parameterized policy for specific individuals which can be refined in real-time using a small amount of data.\\
In our setup, the human operator has a unimodal
performance (characterized by its service time) w.r.t. its cognitive state which is inspired by the Yerkes-Dodson law~\cite{yerkes1908relation}. Intuitively, such unimodal behavior is obtained because excessive stress (high cognitive state) overwhelms the operator and too little stress (low cognitive state) leads to boredom and a reduction in vigilance.\\
While human-in-the-loop is used as a primary
application, this work is applicable to other non-human servers with state-dependent unimodal performance. For example, in the context of traffic flow, the traffic intersection can be interpreted as a server, and traffic flux is a unimodal function of the traffic density~\cite{keyvan2012exploiting}. In such a scenario, the control measures may include admitting a vehicle or rerouting it, to maintain the optimal performance of the traffic network. \\
The major contributions of this work are threefold: (i) we pose the fidelity selection problem in  a Semi-Markov Decision Process (SMDP) framework and compute an optimal policy,
(ii) we numerically show the influence of cognitive dynamics on the optimal policy, and
(iii) we establish structural properties of the optimal fidelity policy and provide sufficient conditions for a threshold-based policy to be optimal.\\
The rest of the paper is structured in the following way. In Section~\ref{Related_work}, we discuss some relevant literature. Section~\ref{Problem Setup} presents the problem setup and formulates the fidelity selection problem using an SMDP framework. In Section~\ref{Numerical Illustration}, we numerically illustrate an optimal fidelity selection policy and establish its structural properties in Section~\ref{Structural Properties}. Finally, in Section~\ref{Conclusions}, we provide conclusions and discuss the future directions of this work.

\section{Related Work}\label{Related_work}
Recent years have seen significant efforts in integrating human knowledge and perception skills with autonomy~\cite{JP-VS-etal:12t}. A key research theme within this area concerns the systematic allocation of human cognitive resources for efficient performance. Therein, some of the fundamental questions studied include optimal scheduling of the tasks to be serviced by the operator~\cite{peters2018robust}, enabling shorter operator reaction times by controlling the task release~\cite{KS-EF:10b},  and determining optimal operator attention allocation~\cite{VS-RC-CL-FB:11zc}. In contrast to the aforementioned works, we consider an SMDP formulation to deal with general (non-memoryless) service time distributions of the human operator. Furthermore, while the above works propose heuristic algorithms, we focus on establishing the structural properties of the optimal policy.\\    
Some interesting recent studies with state-dependent queues are considered in~\cite{lin2021stabilizing, lin2020queueing}. In these works, authors design scheduling policies that stabilize a queueing system and decrease the utilization rate of a non-preemptive server that measures the proportion of time the server is working. The performance of the server degrades with the increase in server utilization and improves when the server is allowed to rest. In contrast to monotonic server performance with the utilization rate in~\cite{lin2021stabilizing, lin2020queueing}, we model the service time of the human operator as a unimodal function of its cognitive state. Our model for service time is inspired by experimental psychology literature~\cite{yerkes1908relation} and incorporates the influence of cognitive state and fidelity level on service time.\\ 
The optimal control of queueing systems~\cite{gupta2021robust} is a classical problem in queueing theory.  Of particular interest are the works~\cite{stidham1989monotonic,sennott1989average}, where authors study the optimal policies for an M/G/1 queue by SMDP formulation and describe its qualitative features. In contrast to a standard control of queues problem, the server in our problem is a human operator with cognitive dynamics that must be incorporated into the problem formulation. \\
Our mathematical techniques to establish the structural properties of the optimal policy are similar to~\cite{agarwal2008structural}. In~\cite{agarwal2008structural}, the authors establish structural properties of an optimal transmission policy for transmitting packets over a point-to-point channel in communication networks. The optimal policy of their Markov decision process depends on the queue length, the number of packet arrivals, and the channel fading state. In~\cite{yang2013structural}, authors 
study structural properties of the optimal resource allocation policy for a single-queue system in which a central decision-maker assigns servers to each job.  In contrast to~\cite{agarwal2008structural, yang2013structural}, a major challenge in our problem arises due to SMDP formulation for non-memoryless service time distribution and its unimodal dependence on the cognitive state. 

\section{Background and Problem Formulation} \label{Problem Setup}
We now discuss our problem setup, formulate it as an SMDP, and solve it to obtain an optimal policy.
\subsection{Problem Setup}
We consider a human supervisory control system in which a human operator is servicing a stream of homogeneous tasks. The human operator may service these tasks with different levels of fidelity. The servicing time of the operator depends on the fidelity level with which she services the task as well as her cognitive state. We assume that the mean service time of the operator increases with the selected fidelity level. For example, when the operator services the task with high fidelity, she may look into deeper details of the task, and consequently take a longer time to service.\\
\begin{figure}
	\centering
	\includegraphics[width=0.7\linewidth, height=\linewidth, keepaspectratio]{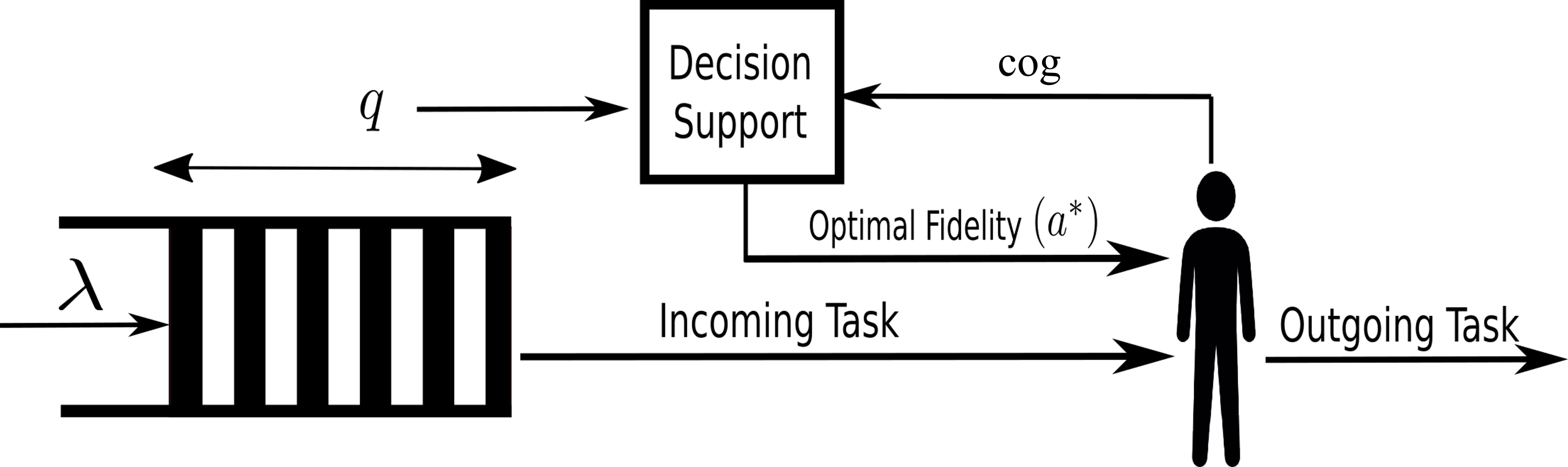}
    \caption{\footnotesize{Overall schematic of the problem setup. The incoming tasks arrive as a Poisson process with a rate $\lambda$. The tasks are serviced by the human operator based on the recommended fidelity level by the decision support system. 
    }} 
    \label{fig:Model}
\end{figure} 
In addition to the fidelity level, the human service time may depend on their cognitive state. We treat the cognitive state as a lumped parameter that can capture various physiological measures. It can be a function of stress, workload, arousal rate, operator utilization ratio, etc. Such lumped representation can be obtained by classifying these psychological measurements into different service time distribution parameters. Inspired by the Yerkes-Dodson law, for a fixed level of fidelity, we model the service time as a unimodal function of the human cognitive state. Specifically, the mean service time is minimal corresponding to an intermediate optimal cognitive state (later referred to as the optimal cognitive state $\cog^*$) as shown in Fig.~\ref{fig:mean_sojurn_cog}. \\
We are interested in the optimal fidelity selection policy for the human operator. To this end, we formulate a control of queue problem, where in contrast to a standard queue, the server is a human operator with her cognitive dynamics. The incoming tasks arrive according to a Poisson process at a given rate $\lambda \in \mathbb{R}_{>0}$ and are serviced by the operator based on the fidelity level recommended by a decision support system (Fig.~\ref{fig:Model}). We consider a dynamic queue of homogeneous tasks with a maximum capacity $L \in \natural$. The operator is penalized for each task waiting in the queue at a constant rate $c \in \mathbb{R}_{>0}$ per unit delay in its servicing. The set of possible actions available for the operator corresponds to (i) \textbf{\textit{Waiting (W)}} when the queue is empty, (ii) \textbf{\textit{Resting (R)}}, which allows the operator to rest and reach the optimal cognitive state, (iii) \textbf{\textit{Skipping (S)}}, which allows the operator to skip a task to reduce the queue length and thereby focus on newer tasks, (iv) \textbf{\textit{Normal Fidelity (N)}} for servicing the task with normal fidelity, and (v) \textbf{\textit{High Fidelity (H)}} for servicing the task more carefully with high precision. The skipping action ensures the stability of the queue by allowing the operator to reduce the queue length by skipping some tasks. Ideally, through appropriate control of the arrival rate, the system designer should ensure that skipping is not an optimal action.\\
 Let $s \in \mc{S}$ be the state of the system and $\mc A_s$ be the set of admissible actions in state $s$, which we define formally in Section~\ref{Mathematical}. The human receives a reward $\map{r}{\mc{S}\times \mc A_s}{\mathbb{R}_{\geq 0}}$ defined by 
\begin{equation}\label{eq:im-rew}
r(s, a) = \begin{cases}
r_H, & \text{if } a = \textit{H},\\
r_N, & \text{if } a = \textit{N},\\
0, & \text{if } a \in \{\textit{W, R, S}\},\\
\end{cases}    
\end{equation}
where, $r_H, r_N \in \real_{\ge 0}$ and $r_H > r_N$.  
We intend to design a decision support system that assists the operator by recommending optimal fidelity level to service each task\footnote{We assume compliance of the operator with the recommendations.  To account for non-compliance, we can introduce $p$ as the probability of compliance and $1-p$ as the probability that the operator will deviate and follow a different behavioral policy. This deviation can be incorporated by using a mixed service time distribution with probabilities $p$ and $1-p$ for the recommended and behavioral actions respectively.}. The recommendation is based on the queue length and the operator's cognitive state which we assume to have real-time access using, e.g., Electroencephalogram (EEG) measurements (see~\cite{rao2013brain} for measures of cognitive load from EEG data) or eye-tracking and pupillometry~\cite{palinko2010estimating}. We assume that the noisy data from these devices can be clustered into a finite number of bins to estimate the cognitive state. We study the optimal policy under the perfect knowledge of the cognitive state\footnote{If the cognitive state is not perfectly known, then our policy can be used within algorithms such as $\subscr{Q}{MDP}$~\cite{littman1995learning}, to derive approximate solutions to the associated partially observable Markov decision process~\cite{spaan2012partially}.}.
\subsection{Mathematical Modeling} \label{Mathematical}
We formulate the control of queue problem as a discrete-time SMDP $\Gamma$ defined by the following six components:
\begin{enumerate}
\item[(i)] A finite state space $\mc{S}:=\{(q,\cog) | \ q\in \{0,1,...,L\},$ $ \ \cog \in \mc{C}:= \{i/N\}_{i \in \{0,\cdots, N\}}\}$, for some $N \in \natural$, where $q$ is the queue length and $\cog$ represents the lumped cognitive state, which increases (decreases) when the operator is busy (idle).
\item[(ii)] A set of admissible actions $\mc A_{s}$ for each state $s \in \mc{S}$ which is given by: (i) $\mc A_s := \{\textit{W} \;| \  s \in \mc{S}, \  q=0 \}$ when queue is empty, (ii) $\mc A_s := \{\{$\textit{R}, \textit{S}, \textit{N}, \textit{H} \}$| \ s \in \mc{S}, \ q\neq 0 \}$ when queue is non-empty and $\cog > \cog^*$, where $\cog^* \in \mc{C}$ is the optimal cognitive state associated with minimum mean service time, and (iii) $\mc A_s := \{\{$\textit{S}, \textit{N}, \textit{H} \}$| \ s \in \mc{S}, \ q\neq 0 \}$ when queue is non-empty and $\cog \leq \cog^*$.

\begin{figure*}
    \centering
	\begin{subfigure}[b]{0.21\textwidth}
	    \centering
        \includegraphics[width=1\linewidth, height=1\linewidth, keepaspectratio]{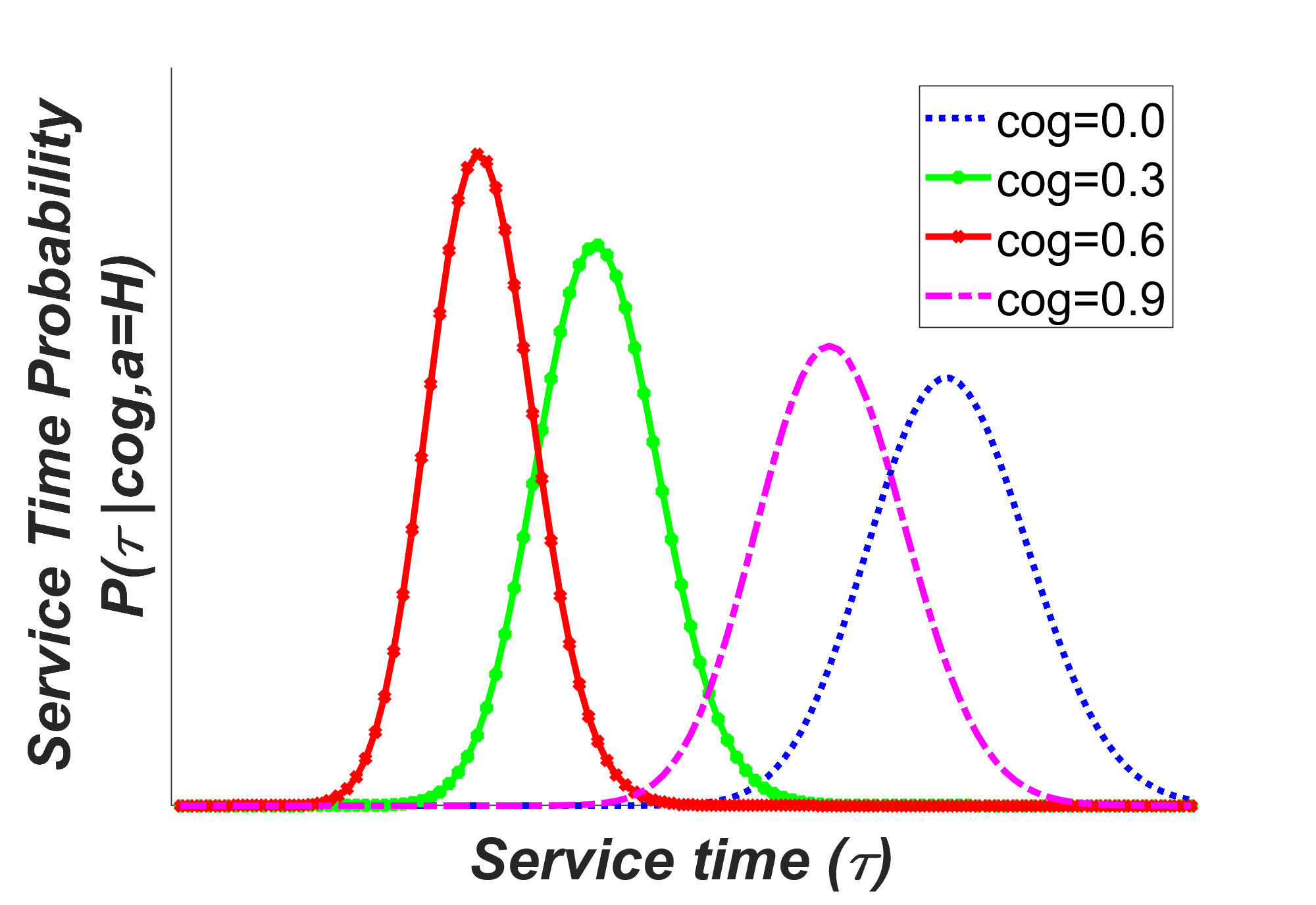}
        \caption{}
        \label{fig:sojurn_cog}
    \end{subfigure}
	\begin{subfigure}[b]{0.21\textwidth}
	    \centering
        \includegraphics[width=1\linewidth, height=1\linewidth, keepaspectratio]{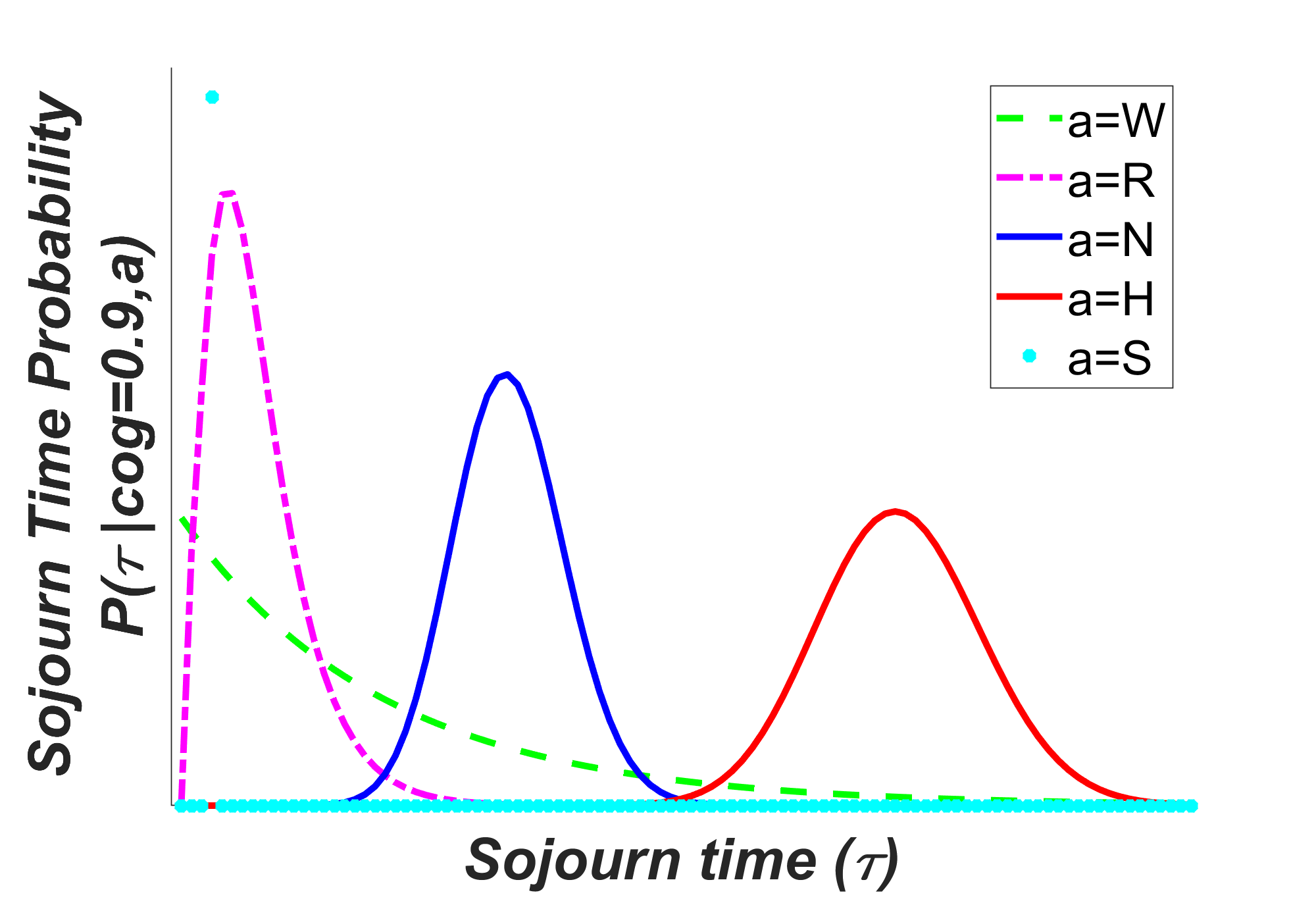}
        \caption{}
        \label{fig:sojurn_att}
    \end{subfigure}
    \begin{subfigure}[b]{0.21\textwidth}
        \centering
        \includegraphics[width=1\linewidth, height=1\linewidth, keepaspectratio]{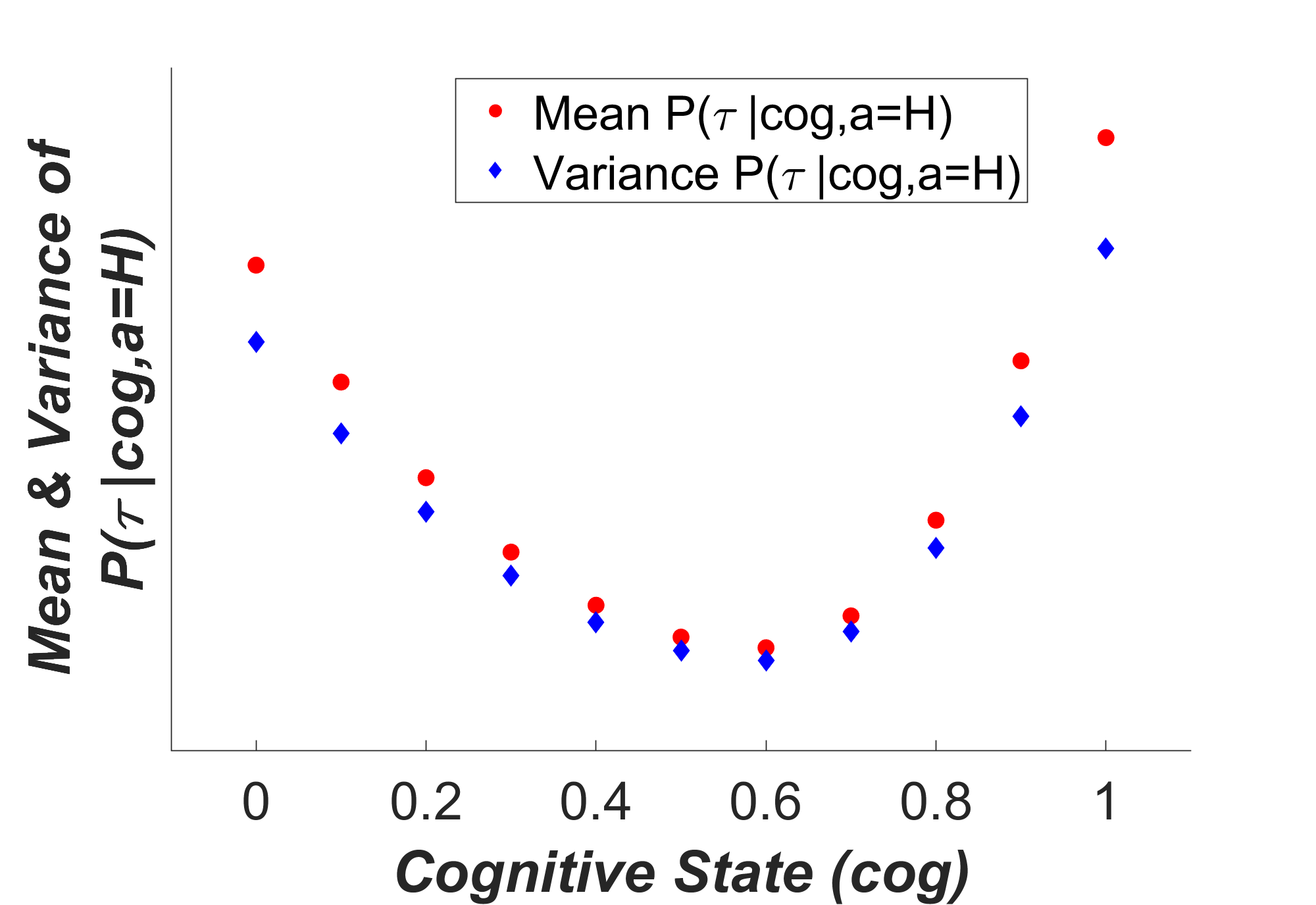}
        \caption{}
        \label{fig:mean_sojurn_cog}
    \end{subfigure}
	\begin{subfigure}[b]{0.21\textwidth}
	    \centering
        \includegraphics[width=1\linewidth, height=1\linewidth, keepaspectratio]{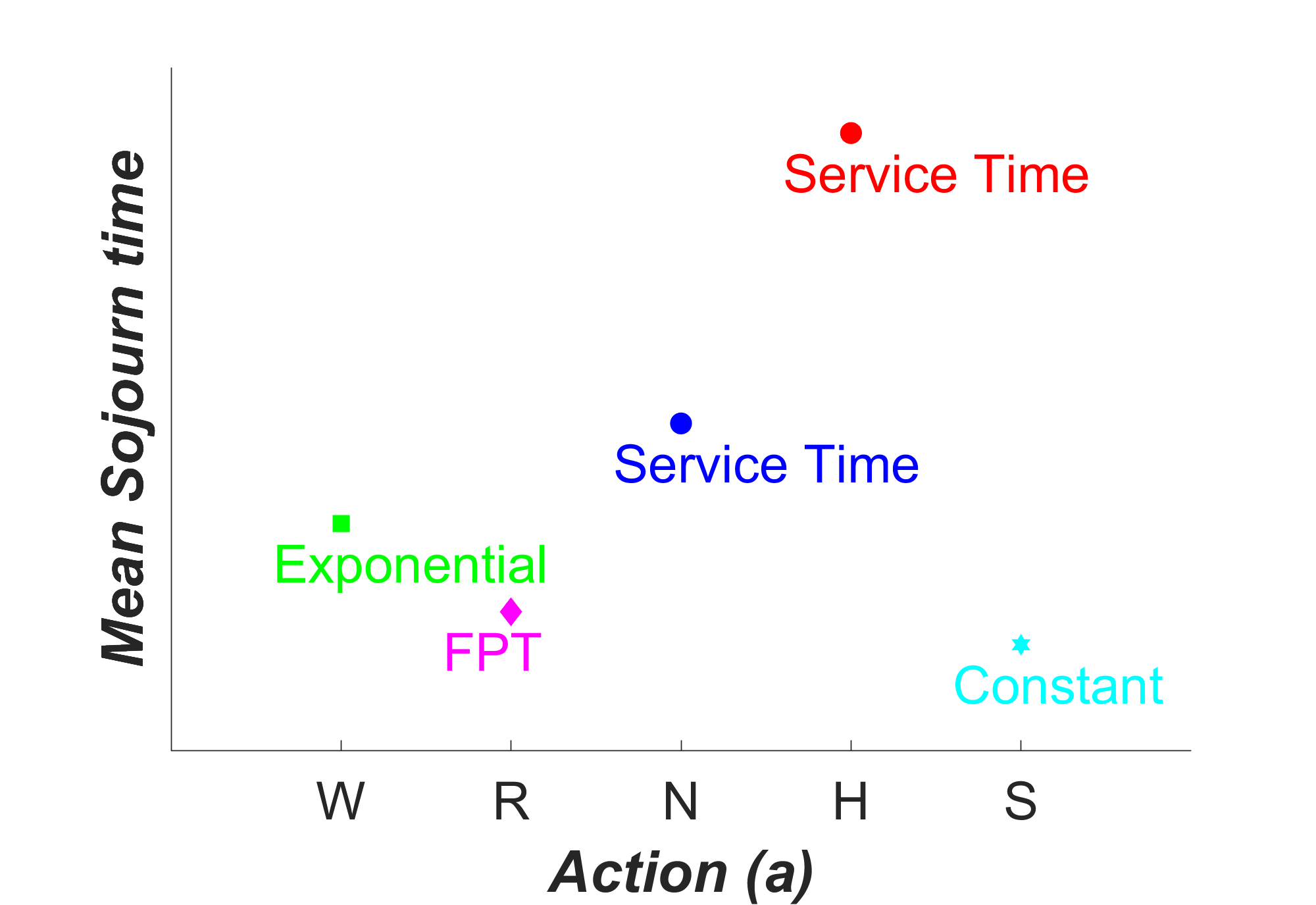}
        \caption{}
        \label{fig:mean_sojurn_att}
    \end{subfigure}
    \caption{\footnotesize{Service time distribution of the human operator with (a) varying cognitive state and high fidelity, (b) varying action and fixed cognitive state, $\cog=0.9$. 
    (c) Mean and variance of the service time distribution are unimodal functions of the cognitive state. (d) The mean sojourn time distribution takes on different forms based on the selected action.}}
    \label{fig:soj}
\end{figure*}
\item[(iii)] 
A state transition distribution  $\mathbb{P}\left(s'|\ \tau, s, a\right)$ from state $s$ to $s'$ for each action $a \in A_s$ conditioned on the discrete sojourn time $\tau \in \mathbb{R}_{>0}$ (time spent in state $s$ before transitioning into next state $s'$). The state transition from $s=(q, \ \cog) \to s'= \left(q', \ \cog'\right)$ consists of two independent transition processes which are given by (i) a Poisson process for transition from $q \to q'$ (ii) human cognitive dynamics for the transition from $\cog \to \cog'$. We model the cognitive dynamics of the human operator as a Markov chain in which, while servicing the task, the probability of an increase in cognitive state in small time $\delta t \in \mathbb{R}_{>0}$ is greater than the probability of a decrease in cognitive state. Furthermore, the probability of the increase in the cognitive state increases with the level of fidelity selected for servicing the task. Similarly, while waiting or resting, the probability of a decrease in cognitive state in small time $\delta t$ is higher than the probability of an increase in cognitive state. Sample parameters of the model used in our numerical simulations are shown in Table~\ref{tab:tab1}. This model of cognitive state dynamics is a stochastic equivalent of deterministic models of the utilization ratio considered in~\cite{KS-EF:10b}. It is assumed that the cognitive state remains unchanged when the human operator chooses to skip the task.
\begin{table}[htbp]
\caption{Cognitive Dynamics modeled as Markov chain}
\begin{center}
\begin{adjustbox}{width=0.42\textwidth}
\begin{tabular}{|c|c|c|c|}
\hline
& \textbf{\textit{Forward}} & \textbf{\textit{Backward}} & \textbf{\textit{Stay Probability$^{\mathrm{c}}$}} \\
\textbf{Action} & \textbf{\textit{Probability$^{\mathrm{a}}$ ($\lambda_f \delta t$)}}& \textbf{\textit{Probability$^{\mathrm{b}}$ ($\lambda_b \delta t$)}}& \textbf{\textit{(1-$\lambda_f \delta t - \lambda_b \delta t$)}} \\
\hline
W& $\lambda_f=0.02$ (Noise) & $\lambda_b = 0.5$ & $1-0.52\delta t$ \\

\hline
R& $\lambda_f=0.02$ (Noise) & $\lambda_b = 0.5$ & $1-0.52\delta t$ \\

\hline
N& $\lambda_f=0.6$ & $\lambda_b = 0.02$ (Noise) & $1-0.62\delta t$ \\

\hline
H& $\lambda_f=1.1$ & $\lambda_b = 0.02$ (Noise) & $1-1.12\delta t$ \\

\hline
S& $\lambda_f=0$ & $\lambda_b = 0$ & $1$ \\
\hline
\multicolumn{4}{l}{$^{\mathrm{a}}$Forward Probability does not exist for $\cog=1$ (reflective boundary)} \\
\multicolumn{4}{l}{$^{\mathrm{b}}$Backward Probability does not exist for $\cog=0$ (reflective boundary)} \\
\multicolumn{4}{l}{$^{\mathrm{c}}$Stay Probability is $1-\lambda_f \delta t$ for $\cog=0$ and $1-\lambda_b \delta t$ for $\cog=1$}
\end{tabular}
\label{tab:tab1}
\end{adjustbox}
\end{center}
\end{table}
\item[(iv)] Sojourn time distribution $\mathbb{P}\left(\tau|\ s, a\right)$ of (discrete) time $\tau \in \mathbb{R}_{>0}$ spent in state $s$ until the next action is chosen takes on different forms depending on the selected action (Fig.~\ref{fig:mean_sojurn_att}). The sojourn time is the service time while servicing the task (normal/ high fidelity), resting time while resting, constant time of skipping $t_s \in \mathbb{R}_{>0}$ while skipping, and time until the next task arrival while waiting in case of an empty queue. We model the rest time as the time required to reach from the current cognitive state to the optimal cognitive state $\cog^*$. In our numerical illustrations, we model the service time distribution while servicing the task using a hypergeometric distribution (Fig.~\ref{fig:sojurn_cog} and~\ref{fig:sojurn_att}), where the parameters of the distribution are chosen such that the mean service time has the desired characteristics, i.e., it increases with the fidelity level (Fig.~\ref{fig:mean_sojurn_att}) and is a unimodal function of the cognitive state (Fig.~\ref{fig:mean_sojurn_cog}). While resting, sojourn time distribution is the first passage time (FPT) distribution for transitioning from the current cognitive state $\cog$ to $\cog^*$. We determine this distribution using matrix methods~\cite{diederich2003simple} applied to the Markov chain used to model the cognitive dynamics. Finally, to ensure the stability of the queue, we assume that the constant time of skip is less than $\frac{1}{\lambda}$, i.e., queue length decreases on average while skipping tasks.
\item[(v)] For selecting action $a$ at state $s$, the human receives a bounded reward $r(s, a)$ defined in~\eqref{eq:im-rew}. Additionally, the human incurs a penalty at a constant cost rate of $c$ due to each task waiting in the queue, and consequently, 
the cumulative expected cost for choosing action $a$ at state $s=(q,\cog)$ is given by:
    \begin{multline*}
    \sum_{\tau}\mathbb{P}(\tau|s,a)c\tau\left (\E\left [\frac{q+q'}{2}\Bigg|\ \tau, s,a\right ]\right )  \\
    = \sum_{\tau}\mathbb{P}(\tau|s,a)c\tau\left (\frac{2q+\lambda\tau}{2}\right ),
    \end{multline*}
which is obtained by using $\mathbb{E}[q|\tau,s,a] = q$ and $\mathbb{E}[q'|\tau,s,a] = q + \lambda\tau$. The  expected net immediate reward received by the operator for selecting an action $a$ in state $s$ is given by: 
\begin{align}
\!\!\!\!\!\! R(s,a) &= r(s,a) - \sum_{\tau}\mathbb{P}(\tau|s,a)c \left(\frac{2q+\lambda\tau}{2} \right) \tau \nonumber \\
&= r(s,a) -c\E\left[\tau|s,a\right]q - \frac{c\lambda}{2}\E\left[\tau^2|s,a\right],
\label{eq:1}
\end{align}
where $\E\left[\tau|\ s,a\right]$ and $\E\left[\tau^2|s,a\right]$ represent the first and the second conditional moment of the sojourn time distribution, respectively.
\item[(vi)] A discount factor $\gamma \in [0, 1)$, which we choose as $0.96$ for our numerical illustration.
\end{enumerate}
\begin{remark}
Although we assume a finite skip time, an alternative approach is to incorporate a penalty for the skip action. Note that, unlike a fixed penalty, a finite skip time results in a penalty that increases with queue length (see~\eqref{eq:1}). Consequently, the current approach is less inclined to skip tasks as the queue length increases compared to a model with a constant penalty.
\end{remark}
\begin{remark}
The reward $R(s, a)$ formulation can be interpreted as an unconstrained SMDP corresponding to a constrained SMDP that maximizes $r(s, a)$ subject to a constraint on the average queue length for the stability of the queue. Therefore, the penalty rate $c$ acts as the Lagrange multiplier for the unconstrained problem, and hence, can be obtained by primal-dual methods that use dual ascent for finding the Lagrange multiplier~\cite{agarwal2008structural}.
\end{remark}
\subsection{Solving SMDP for Optimal Policy}
For SMDP $\Gamma$,  the optimal value function $V^*: \mc{S} \rightarrow \mathbb{R}$ satisfies the following Bellman equation \cite{barto2003recent}:\\
\begin{equation}
V^{*}(s)=\max_{a\in{A_{s}}}\left[R(s,a)\ + \sum_{s',\tau}\gamma^{\tau}\mathbb{P}\left(s',\tau| s,a\right)V^{*}\left(s'\right)\right],
\label{eq:2}
\end{equation}
where $\mathbb{P}\left(s',\tau|s,a\right)$, which is the joint probability that a transition from state $s$ to state $s'$ occurs after time $\tau$ when action $a$ is selected can be rewritten as: 
\begin{equation}
\mathbb{P}\left(s',\tau|s,a\right)= \mathbb{P}\left(s'|\tau,s,a\right)\mathbb{P}\left(\tau|s,a\right),
\label{eq:3}
\end{equation}
where  $\mathbb{P}\left(s'|\tau,s,a\right)$ and $\mathbb{P}\left(\tau|s,a\right)$ are given by the state transition probability distribution and the sojourn time probability distribution, respectively. An optimal policy $\pi^*: \mc{S} \rightarrow \mc A_s$ at each state $s$ selects an action that achieves the maximum in~\eqref{eq:2}. 
We utilize the value iteration algorithm~\cite{sutton2018reinforcement} to compute an optimal policy.
\section{Numerical Illustrations} \label{Numerical Illustration}
We now numerically illustrate the optimal value function and an optimal policy for SMDP $\Gamma$.\\
\begin{figure}
    \centering
	\begin{subfigure}[b]{0.21\textwidth}
	    \centering
        \includegraphics[width=1\linewidth, height=1\linewidth, keepaspectratio]{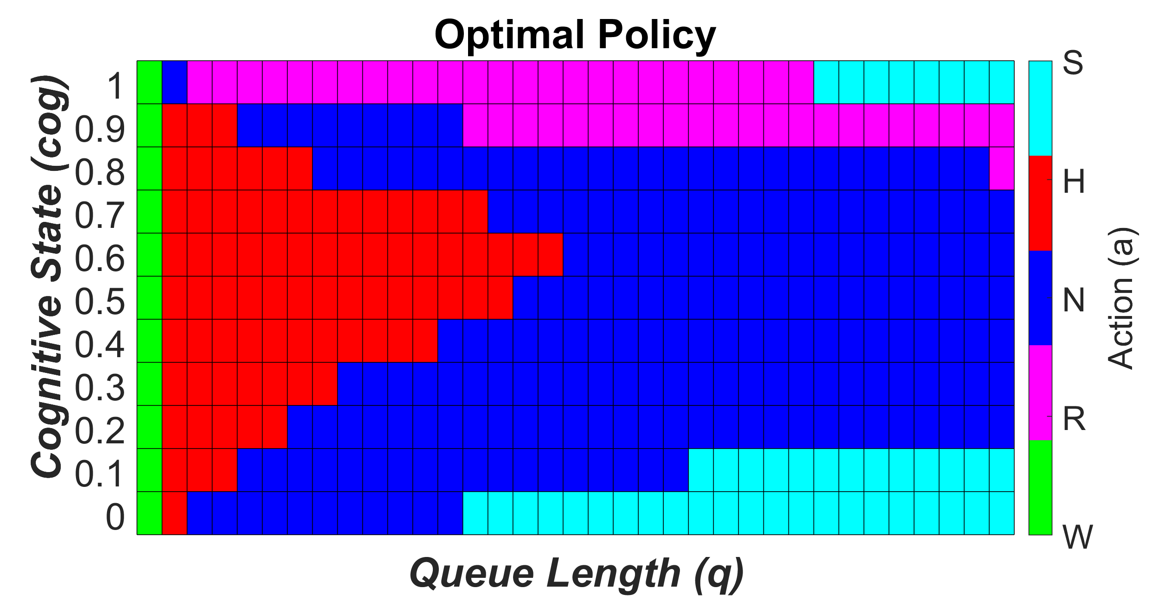}
        \caption{}
        \label{fig:Optimal_policy}
    \end{subfigure}
	\begin{subfigure}[b]{0.21\textwidth}
	    \centering
        \includegraphics[width=1\linewidth, height=1\linewidth, keepaspectratio]{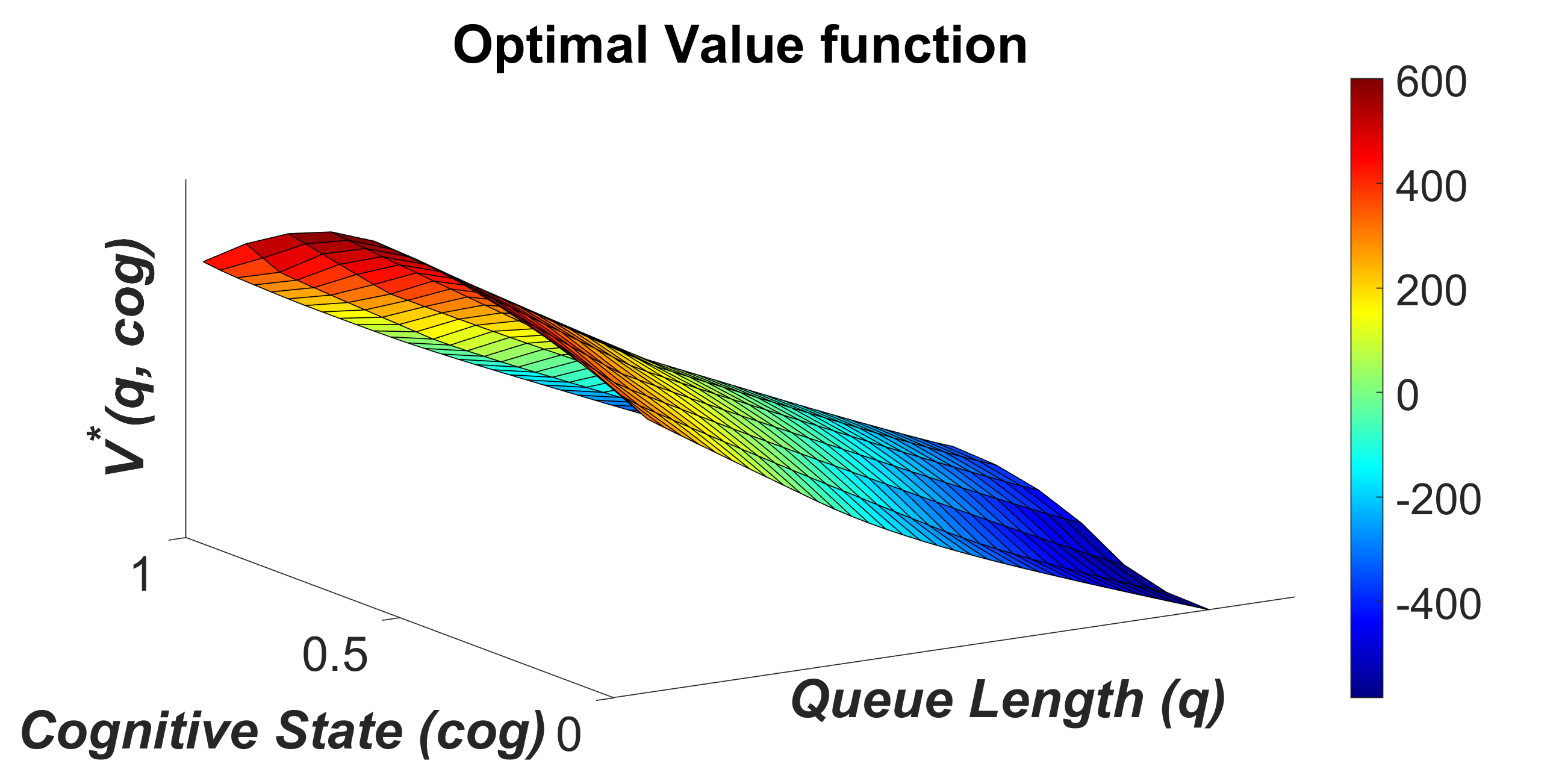}
        \caption{}
        \label{fig:Optimal_value}
    \end{subfigure}
    
    \caption{\footnotesize(a) Optimal Policy $\pi^*$ and (b) Optimal Value Function $V^*$ for SMDP $\Gamma$ where the time required to skip the tasks is not too small compared to the mean service time.}
    \label{fig:Optimal}
\end{figure}
\begin{figure}[ht]
    \centering
	\begin{subfigure}[b]{0.15\textwidth}
	    \centering
        \includegraphics[width=1\linewidth, height=1\linewidth, keepaspectratio]{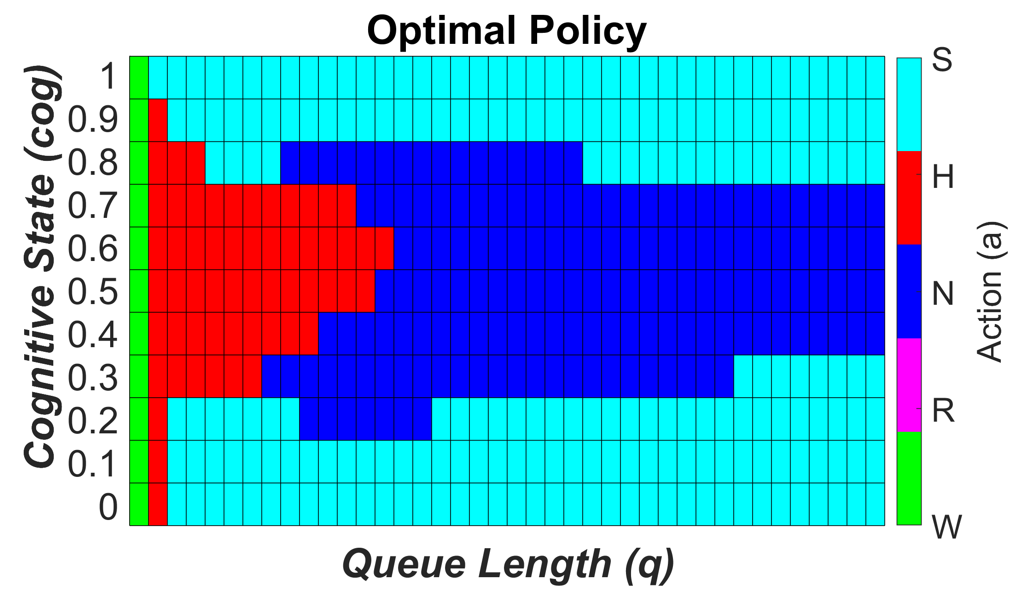}
        \caption{}
        \label{fig:Optimal_policy_05_}
    \end{subfigure}
    \begin{subfigure}[b]{0.15\textwidth}
	    \centering
        \includegraphics[width=1\linewidth, height=1\linewidth, keepaspectratio]{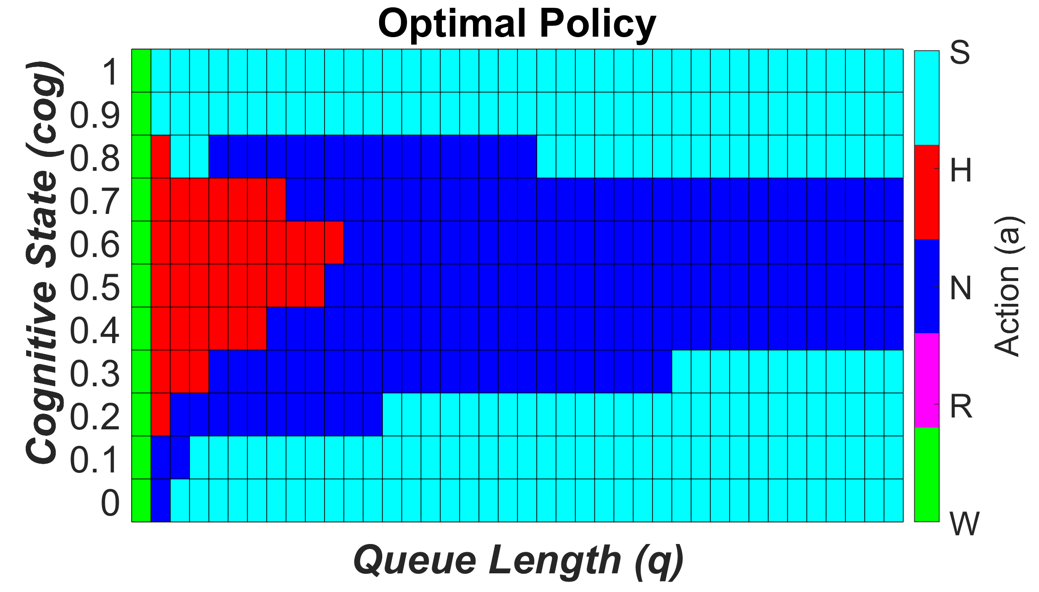}
        \caption{}
        \label{fig:Optimal_policy_1_}
    \end{subfigure}
	\begin{subfigure}[b]{0.15\textwidth}
	    \centering
        \includegraphics[width=1.1\linewidth, height=2.2\linewidth, keepaspectratio]{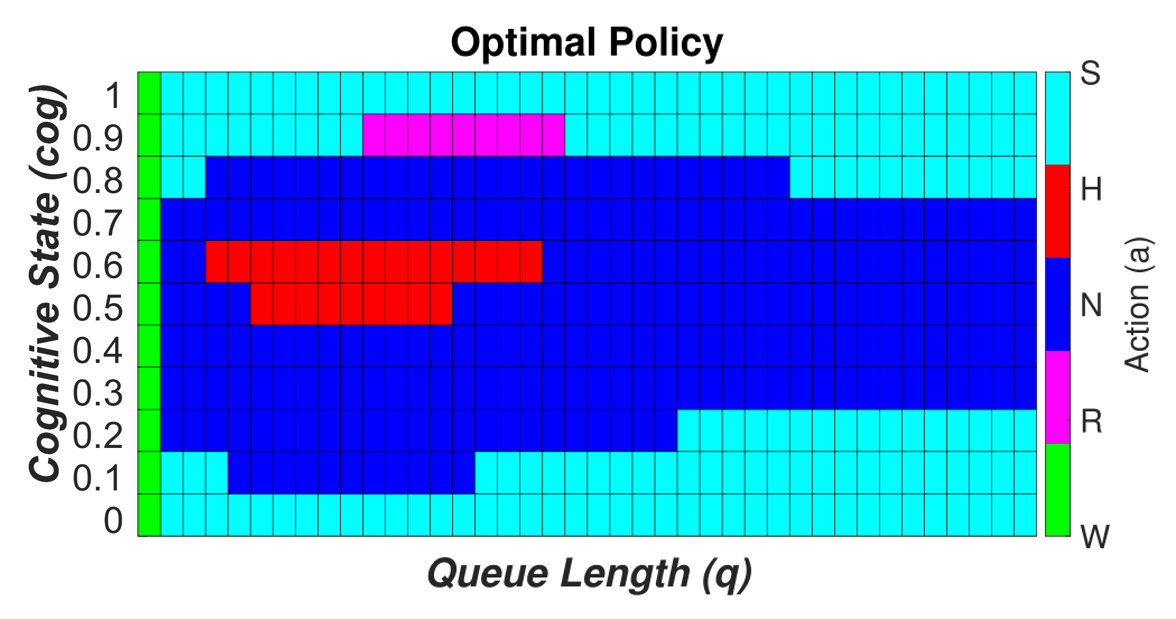}
        \caption{}
        \label{fig:Optimal_value_15_}
    \end{subfigure}
    
    \caption{\footnotesize{Optimal policy $\pi^*$ for $(a) \lambda = 0.5$, (b) $\lambda = 1$, and (c) $\lambda = 4$.} In cases (a) and (b), action \textit{S} in the optimal policy does not have a unique threshold for some cognitive states. Similarly, in case (c), actions \textit{S} and \textit{N} in the optimal policy do not have unique thresholds.}
    \label{fig:Optimal_skip_}
\end{figure}
Fig.~\ref{fig:Optimal_policy} and~\ref{fig:Optimal_value} show an optimal policy $\pi^{*}$, and the optimal value function $V^{*}$, respectively, for the case in which the skip time is not too small compared to the mean service time. If the skip time is too small, the action $S$ is the optimal action almost everywhere to reduce the queue length. For a sufficiently high arrival rate $\lambda$ such that there is always a task in the queue after servicing the current task, we observe that for any given $\cog$, $V^*$ is monotonically decreasing with $q$.\\
Additionally, we observe that for a given $q$, $V^*$ is an unimodal function of $\cog$, with its maximum value corresponding to the optimal cognitive state ($\cog^* =0.6$ for numerical illustrations). We observe that $\pi^*$ selects the high fidelity level around the $\cog^*$ for low queue length, and thereafter transitions to a normal fidelity level for higher queue lengths. We also observe that in low cognitive states, the optimal policy is to keep skipping the tasks until the queue length becomes small, and then start servicing the tasks. In higher cognitive states, we observe that resting is the optimal action at smaller queue lengths while skipping tasks is the optimal choice at larger queue lengths. Additionally, we observe the effect of $\cog$ on $\pi^*$. In particular, we observe that $\pi^*$ switches from \textit{H} to \textit{N}, \textit{N} to \textit{R}, and \textit{R} to \textit{S} at certain thresholds on $q$, and these thresholds appear to be a unimodal function of $\cog$. This behavior can be attributed to the mean service time being unimodal w.r.t $\cog$. \\
Fig.~\ref{fig:Optimal_skip_} shows some examples of $\pi^*$ for certain parameters. We observe that for some cognitive states, $\pi^*$ does not have a unique threshold and the same action reappears after switching to another action. For example, in Fig.~\ref{fig:Optimal_policy_05_}, action \textit{S} is observed between actions \textit{H} and \textit{N}, as well as after action \textit{N}. In the following section, we provide sufficient conditions under which  $\pi^*$  has unique transition thresholds at which actions switch, and the previous action does not re-appear for the same $\cog$.

\section{Structural Properties of the Optimal Policy} \label{Structural Properties}
We establish the structural properties of the optimal infinite-horizon value function by considering the finite horizon case and then extending the results to the infinite horizon by taking the infinite step limit.\\
Let $V_n^*(s_0), n \ge 0,$ be the discounted $n$-step optimal expected reward when the initial state is $s_0$, where $V_0^*(q,\cog)= -Cq$ {is the terminal cost for the finite-horizon case for a non-negative constant $C$}. Each step size $k \in \{0, \ldots, n-1 \}$ is based on the sojourn time $\tau_k$, spent in a state $s_k$ when action $a_k$ is selected. Let $J_{n,\pi}(s_0)$ denote the discounted $n$-step expected reward with initial state $s_0$ under a given policy $\pi$. Henceforth, for brevity of notation, we denote the conditional expectation $\E[\cdot | s_0, \pi]$ by $\E_{\pi}[\cdot]$. $J_{n,\pi}(s_0)$ is given by:\\
\begin{equation}{\label{eq:60}}
 \!\!\!\!\!\! J_{n,\pi}(s_0)=\E_{\pi}\left[\sum_{i=0}^{n-1}\gamma^{\zeta_{i}}R(s_i,a_i) - \gamma^{\zeta_{n}}Cq_n \ \right],
\end{equation}
where $\zeta_{i}:={\sum_{j=0}^{i-1}\tau_j}$ for $i>0$ and $\zeta_{0}:=0$.
The discounted $n$-step optimal expected reward $V_n^*(s)$ is given by: 
\begin{equation}{\label{eq:61}}
       V_n^*(s_0)= J_{n,\pi^*}(s_0),
\end{equation}
where $\pi^*$ is the optimal policy that maximizes $J_{n,\pi}(s_0)$ at each $s_0$. \\
Let $\map{\mu^1}{\mc S\times \mc A_s}{\real_{>0}}$ and $\map{\mu^2}{\mc S\times \mc A_s}{\real_{>0}}$ be function defined by $\mu^1(s,a)= \E\left[\tau | s,a\right]$ and $\mu^2(s, a)= \E\left[\tau^2|s,a\right]$, where $\tau$ is the sojourn time. {We study the structural properties of the optimal policy for a large queue capacity, i.e. in the limit $L \to +\infty$,} 
and under the following assumptions:
\begin{enumerate}
    \item [(A1)] {The task arrival rate $\lambda$ is sufficiently high so that the queue is never empty with high probability.}
        \item [(A2)] {For any state $s=(q, \cog)$\footnote{The action $R$ is only available for states with $\cog>\cog^*$.}}:
\begin{align}
\begin{split}
 &\!\!\!\!\!\!\!\! \mu^1(s, S)< \mu^1(s, R) < \mu^1(s, N) < \mu^1(s, H), \text{ and }  \\
&\!\!\!\!\!\!\!\! \mu^2(s, S)< \mu^2(s, R) < \mu^2(s, N) < \mu^2(s, H). 
\label{eq:7}
\end{split}
\end{align}
\item[(A3)] {We assume that $\E_{\pi}[\gamma^{\tau}] \le f(\E_{\pi}[\tau], \var_{\pi}(\tau) ) < 1$, where $\var_{\pi}(\tau) = \var(\tau|s_0, a=\pi(s_0))$ is the variance of $\tau$ in any initial state $s_0$ under a given policy $\pi$, and $f$ is a monotonic function such that $f(\cdot, \var_{\pi}(\tau))$ is monotonically decreasing and  $f(\E_{\pi}[\tau] ,\cdot)$ is monotonically increasing.}   
\end{enumerate}
We make the assumption (A1) for convenience. Indeed, if the queue is allowed to be empty, then we will need to deal with an extra ``waiting" action. Also, high arrival rates are the most interesting setting to study optimal fidelity selection. Assumption (A2) is true for a broad range of interesting parameters that define sojourn time distribution(s). {Assumption (A3) holds for a class of light-tail distributions with non-negative support for $\tau$}, for example, let the  moment generating function (MGF) of $\tau$ be upper bounded by the MGF of Gamma distribution. Therefore, we have
\begin{align*}
 \E_{\pi}[e^{t\tau}] &\le  \left(1-\frac{\var_{\pi}(\tau) t}{\E_{\pi}[\tau]}\right)^{\frac{-\E_{\pi}[\tau]^2}{\var_{\pi}(\tau)}}, \text{for all $t < \frac{\E_{\pi}[\tau]}{\var_{\pi}(\tau)}$.} 
 \end{align*}
 Substituting $t=\ln(\gamma) <0 <  \frac{\E_{\pi}[\tau]}{\var_{\pi}(\tau)}$, we get
 \begin{align*}
 \E_{\pi}[\gamma^{\tau}] &\le \left(1-\frac{\var_{\pi}(\tau) \ln(\gamma)}{\E_{\pi}[\tau]}\right)^{\frac{-\E_{\pi}[\tau]^2}{\var_{\pi}(\tau)}} \\
 &=: f(\E_{\pi}[\tau], \var_{\pi}(\tau)).
 \end{align*}
Let $\rho := \max_{\cog,a}f(\E[\tau|\cog,a], \var(\tau|\cog,a))$. Therefore, $\E_{\pi}[\gamma^{\tau}] \le \rho$. For the class of distributions of $\tau$ satisfying assumption (A3), and  any initial state $s_0$ and policy $\pi$, we have
\begin{align*}
    \E_{\pi}[\gamma^{\zeta_k}] &\overset{(1)^*}{=}  \prod_{i=0}^{k-1} \E_{\pi}[\gamma^{\tau_i}] \le \prod_{i=0}^{k-1} f(\E_{\pi}[\tau_i], \var_{\pi}(\tau_i)) \le \rho^k,  
\end{align*}
 where $(1)^*$ follows from the independence of $\tau_i$ and $\tau_j$, for $i\neq j$. 
 Therefore, we have
 \begin{align*}
     \lim_{n \rightarrow \infty}\sum_{k=0}^{n-1}\E_{\pi}[\gamma^{\zeta_k}] \le    \lim_{n \rightarrow \infty} \sum_{k=0}^{n-1} \rho^k = \frac{1}{1-\rho}.
 \end{align*}
We will now establish that the optimal policy for SMDP $\Gamma$ is a threshold-based policy if the following condition holds {for each cognitive state $\cog$}:
\begin{multline}\label{eq:9}
\min\{\mathbb{E}[\tau|\cog,H]-\mathbb{E}[\tau|\cog,N],  \ \mathbb{E}[\tau|\cog,N]- \\ \mathbb{E}[\tau|\cog,R], \ \mathbb{E}[\tau|\cog,R]-t_s \} + \\ \frac{t_s\gamma^{\E[\tau| \cog, H]}}{1-\gamma^{t_{\max}}}  \ge \frac{t_{\max}}{1 - \rho} \max_{a \in\mc A_{s}}\E[\gamma^{\tau}| \cog, a],
\end{multline}
 where 
\begin{equation*}
\rho = \max_{\cog,a}f(\E[\tau|\cog,a], \var(\tau|\cog,a))
\end{equation*} 
 is an upper bound for $\E_{\pi}[\gamma^{\tau}]$, $t_{\max} =\E[\tau | \cog=1, a=H] $ is the maximum expected sojourn time {(assuming largest mean service time in highest cognitive state)}, and $t_s$ is the constant time for skip.\\
 {\begin{remark}
For tasks with large differences in expected sojourn times, i.e., $0 \ll t_s  \ll \E[\tau | \cog, R]  \ll \E[\tau| \cog, N] \ll \E[\tau| \cog, H]$, $\max_{a \in \mc A_{s}}\E[\gamma^{\tau}| \cog , a] \rightarrow 0$, and~\eqref{eq:9} always holds.
 \end{remark}}
 {We introduce the following notation. Let $\map{q_j^*}{\mc C}{\integer_{\ge 0} \union \{+\infty\}} $, for $j \in \{1,2,3\}$ be some functions of the cognitive state.}
\begin{theorem}[\bit{Structure of optimal policy}]\label{thm:thm1}
For SMDP $\Gamma$ under assumptions {(A1-A3)} and an associated optimal policy $\pi^*$, if 
the difference in the expected sojourn times is sufficiently large such that~\eqref{eq:9} holds for any cognitive state $\cog$, then the following statements hold:
\begin{enumerate}
    \item[(i)] there exists unique threshold functions $q_1^*(\cog)$, $q_2^*(\cog)$, and $q_3^*(\cog)$
    such that for each $\cog > \cog^*$:
\begin{equation}
\!\!\!\!\!\!\!\!\pi^*(s=(q,\cog))=\begin{cases}
\textit{H},  &  q \leq q_1^*(\cog), \\
                  \textit{N},  &  q_1^*(\cog) <  q \leq q_2^*(\cog), \\
                  \textit{R}, & q_2^*(\cog)< q \leq q_3^*(\cog), \\
                  \textit{S},  &  q > q_3^*(\cog); \\
\end{cases}\label{eq:theorem}
\end{equation}
\item[(ii)] there exists unique threshold functions $q_1^*(\cog)$ and $q_2^*(\cog)$ such that for any $\cog \leq \cog^*$:
\begin{equation}
\!\!\!\!\!\!\!\!\pi^*(s=(q,\cog))= \begin{cases}
                  \textit{H},  & q \leq q_1^*(\cog), \\
                  \textit{N},  &  q_1^*(\cog) <  q \leq q_2^*(\cog), \\
                  \textit{S}, & q > q_2^*(\cog).
\end{cases} 
\label{eq:theorem2}
\end{equation}
\end{enumerate}
\end{theorem}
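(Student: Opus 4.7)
The plan is to prove the theorem via finite-horizon induction: work with the discounted $n$-step value functions $V_n^*$ defined by value iteration from $V_0^*(q,\cog)=-Cq$, establish the threshold structure uniformly in $n$, and pass to $n\to\infty$ using the geometric bound $\sum_k \E_\pi[\gamma^{\zeta_k}]\le 1/(1-\rho)$ already derived from assumption (A3). The central object is the state-action value
\begin{equation*}
Q_n^*(q,\cog,a) \;=\; R(s,a)+\sum_{s',\tau}\gamma^\tau\,\mathbb{P}(s',\tau\mid s,a)\,V_{n-1}^*(s'),
\end{equation*}
together with the action advantage $\Delta_n(q,\cog;a,a'):=Q_n^*(q,\cog,a)-Q_n^*(q,\cog,a')$ for any pair of admissible actions satisfying $\mu^1(s,a)>\mu^1(s,a')$. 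The target is to show that $\Delta_n(\cdot,\cog;a,a')$ is strictly decreasing in $q$ for every fixed $\cog$; this gives at most one sign change in $q$, and applying it to the adjacent pairs $(\textit{H},\textit{N})$, $(\textit{N},\textit{R})$, $(\textit{R},\textit{S})$ for $\cog>\cog^*$ and $(\textit{H},\textit{N}),(\textit{N},\textit{S})$ for $\cog\le\cog^*$ directly yields the thresholds in~\eqref{eq:theorem} and~\eqref{eq:theorem2}, with $q_j^*(\cog)=+\infty$ handling the case that the crossing lies outside the admissible range.

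First I would establish two inductive lemmas on $V_n^*$. Lemma~1 asserts that $V_n^*(q,\cog)$ is non-increasing in $q$ for each fixed $\cog$; the base case follows from $V_0^*=-Cq$, and the inductive step uses (i) $R(s,a)$ is linearly decreasing in $q$ through the $-c\mu^1(s,a)\,q$ term in~\eqref{eq:1}, and (ii) the Poisson arrival process preserves stochastic dominance of the queue-length marginal when $q$ increases. Lemma~2 is the uniform marginal bound
\begin{equation*}
0 \;\le\; V_n^*(q,\cog)-V_n^*(q+1,\cog) \;\le\; \frac{c\,t_{\max}}{1-\rho},
\end{equation*}
obtained by expanding the Bellman operator, signing the increments via Lemma~1, and bounding the geometric sum of $\gamma^\tau$-factors through (A3). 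This bound quantifies how much future value a longer-sojourn action can gain by leaving the queue shorter.

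Next I would combine the two lemmas to prove monotonicity of $\Delta_n$ in $q$. Using~\eqref{eq:1}, the immediate-reward contribution to $\Delta_n(q+1,\cog;a,a')-\Delta_n(q,\cog;a,a')$ equals $-c\bigl(\mu^1(s,a)-\mu^1(s,a')\bigr)$, whose magnitude is at least $c$ times the minimum sojourn-time gap appearing in the first $\min$ of~\eqref{eq:9}, courtesy of (A2). The future-value contribution is bounded in magnitude by the product of the marginal bound $c\,t_{\max}/(1-\rho)$ from Lemma~2, the discount factor $\max_a\E[\gamma^\tau\mid\cog,a]$, and the expected queue increment over one sojourn; after collecting constants this reproduces exactly the right-hand side of~\eqref{eq:9}. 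The additive term $t_s\gamma^{\E[\tau\mid\cog,H]}/(1-\gamma^{t_{\max}})$ on the left of~\eqref{eq:9} arises from the guaranteed queue reduction contributed by the skip action $\textit{S}$ through subsequent decision stages. Condition~\eqref{eq:9} is precisely what forces the immediate-reward drop to dominate the future-value change, so $\Delta_n$ is strictly decreasing in $q$ for every $n$. Uniqueness of thresholds in~\eqref{eq:theorem}--\eqref{eq:theorem2} at horizon $n$ is then immediate, and pointwise convergence $V_n^*\to V^*$ transfers the structure to $\pi^*$.

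The main obstacle will be controlling the future-value term in $\Delta_n$. Because $a$ and $a'$ induce different sojourn distributions, different discount-weight profiles $\E[\gamma^\tau\mid s,a]$, and different cognitive transitions, the term $\sum_{s',\tau}\gamma^\tau[\mathbb{P}(s',\tau\mid s,a)-\mathbb{P}(s',\tau\mid s,a')]V_{n-1}^*(s')$ cannot be signed termwise. The trick is to decompose it into a $q$-sensitive piece, bounded via Lemma~2, and a $q$-insensitive piece, bounded via (A3) and the $\rho$-geometric sum, and then verify that the composite bound matches exactly the right-hand side of~\eqref{eq:9}. This bookkeeping, more than any single estimate, is the source of the ornate form of the sufficient condition and is where the bulk of the proof effort will lie.
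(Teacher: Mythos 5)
Your overall strategy is essentially the paper's: the paper likewise proves (Lemma~\ref{lemma:reward}) that $R(s,a)$ is affine and decreasing in $q$, establishes two-sided Lipschitz bounds on the queue-marginal of the value function (Lemma~\ref{lemma:value_function_bound}), and then shows a single-crossing property for each pair of actions --- if the shorter action dominates at $q_1$ it dominates at every $q_2>q_1$ --- with condition~\eqref{eq:9} arising exactly as you describe, from balancing the slope difference $c(\mu^1(s,a)-\mu^1(s,a'))$ of the immediate reward against the $q$-sensitivity of the future-value term. The only notable stylistic difference is that the paper derives the value-function bounds by a policy-coupling argument (applying the $q$-shifted policy $\tilde\pi(\overline q,\overline\cog)=\pi(\overline q+q_0-\tilde q_0,\overline\cog)$ from the larger initial queue, so the cognitive/sojourn processes coincide almost surely) and then runs the single-crossing argument directly on the infinite-horizon Bellman equation, whereas you propose to induct the single-crossing property itself over the horizon $n$; both routes are viable.

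There is, however, one concrete gap: your Lemma~2 states only $0 \le V_n^*(q,\cog)-V_n^*(q+1,\cog) \le c\,t_{\max}/(1-\rho)$, but a lower bound of $0$ is not enough to reproduce~\eqref{eq:9}. In the paper's computation the future-value difference is $\sum \gamma^{\tau}\texttt{Pois}(\overline q|\tau)\,V_D\,(\mathbb{P}(\cog',\tau|\cog,N)-\mathbb{P}(\cog',\tau|\cog,H))$ with $V_D\ge 0$; since the probability difference changes sign, one must bound $V_D$ from \emph{below} by $\beta=\frac{ct_s(q_2-q_1)}{1-\gamma^{t_{\max}}}$ on the part weighted by $\mathbb{P}(\cdot|\cog,H)$ and from above by $\alpha=\frac{ct_{\max}(q_2-q_1)}{1-\rho}$ on the part weighted by $\mathbb{P}(\cdot|\cog,N)$. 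The strictly positive lower bound is what produces the helpful additive term $\frac{t_s\gamma^{\E[\tau|\cog,H]}}{1-\gamma^{t_{\max}}}$ on the left of~\eqref{eq:9}; with lower bound $0$ you would only obtain the strictly stronger sufficient condition $\min\{\cdots\}\ge \frac{t_{\max}}{1-\rho}\max_a\E[\gamma^{\tau}|\cog,a]$, so the theorem under hypothesis~\eqref{eq:9} would not follow. You do gesture at this term, but attribute it to ``guaranteed queue reduction contributed by the skip action''; its actual source is that every stage's sojourn time is at least $t_s$, so the discounted holding-cost penalty per unit of extra queue is at least $ct_s\sum_k\gamma^{kt_{\max}}$ (via Jensen's inequality applied to $\gamma^{\zeta_k}$). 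You need to state and prove that positive lower bound as part of your Lemma~2 for the bookkeeping to close.
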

We prove Theorem~\ref{thm:thm1} with the help of the following lemmas.\\
\begin{lemma}\label{lemma:reward}(\bit{Immediate Reward}):
For SMDP $\Gamma$, the immediate expected reward $R(s,a)$, for each $a \in A_s$
\begin{enumerate}
    \item[(i)]  is linearly decreasing with queue length $q$ for any fixed cognitive state $\cog$; 
    \item[(ii)] is a unimodal function\footnote{The expected immediate reward under action \textit{S} is a constant, which we treat as a unimodal function.} of the cognitive state $\cog$ for any fixed queue length $q$ with its maximum value achieved at the optimal cognitive state $\cog^*$.
\end{enumerate}
\end{lemma}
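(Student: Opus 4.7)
The plan is to read both parts directly off the closed-form expression from~\eqref{eq:1},
\begin{equation*}
R(s,a)=r(s,a)-c\,\E[\tau|s,a]\,q-\frac{c\lambda}{2}\,\E[\tau^{2}|s,a],
\end{equation*}
together with the fact that the sojourn-time distribution $\mathbb{P}(\tau|s,a)$ depends on $s=(q,\cog)$ only through the cognitive component $\cog$: for $a\in\{N,H\}$ the service time is modelled via $(\cog,a)$ alone, for $a=R$ the rest time is the FPT of the cognitive Markov chain from $\cog$ to $\cog^{*}$, for $a=S$ it equals the constant $t_{s}$, and for $a=W$ it is the inter-arrival time of the Poisson process with rate $\lambda$.

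Part~(i) will then be immediate. Holding $\cog$ fixed, $r(s,a)$ is independent of $q$ by~\eqref{eq:im-rew}, and both moments $\E[\tau|\cog,a]$ and $\E[\tau^{2}|\cog,a]$ are independent of $q$ as well, so $R(\cdot,a)$ is affine in $q$ with slope $-c\,\E[\tau|\cog,a]<0$ for every admissible action, and hence strictly linearly decreasing. The action $W$ is admissible only at $q=0$, where the statement is vacuous.

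For part~(ii) I would fix $q$ and treat $R(s,a)$ as a function of $\cog$ for each action in turn. Because $r(s,a)$ does not depend on $\cog$, it suffices to show that both $\E[\tau|\cog,a]$ and $\E[\tau^{2}|\cog,a]$ are unimodal in $\cog$ with a \emph{minimum} at $\cog^{*}$: the negatives of these moments (scaled by $cq\ge 0$ and $c\lambda/2>0$) are then unimodal with maximum at $\cog^{*}$, and the sum of unimodal functions sharing a common maximizer is unimodal with that same maximizer. For $a\in\{N,H\}$ this is precisely the Yerkes--Dodson-inspired modelling assumption depicted in Fig.~\ref{fig:mean_sojurn_cog} (the second moment being the sum of the squared mean and the variance, both of which are unimodal with minimum at $\cog^{*}$). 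For $a=R$ (admissible only when $\cog>\cog^{*}$), the FPT to $\cog^{*}$ has first and second moments monotonically increasing in $\cog$, so $R(s,R)$ is monotonically decreasing on this domain and thus unimodal with the reward maximized as $\cog\downarrow\cog^{*}$. For $a\in\{S,W\}$ both moments are constants in $\cog$, so $R(s,a)$ is constant in $\cog$, which the footnote treats as unimodal.

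The only mildly delicate step is checking that the rest-time moments are monotonically increasing in the initial state $\cog>\cog^{*}$. I expect to argue this by a one-step coupling of the birth--death cognitive Markov chain: a chain started farther above $\cog^{*}$ must pass through every intermediate state before first hitting $\cog^{*}$, which yields stochastic dominance of the hitting time and hence monotonicity of all of its moments. Everything else amounts to unpacking~\eqref{eq:1} and invoking the already-stated modelling hypotheses on the sojourn-time distribution.
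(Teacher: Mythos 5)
Your proof is correct and follows essentially the same route as the paper's: part (i) reads the affine dependence on $q$ directly off~\eqref{eq:1}, using that the sojourn-time moments do not depend on the queue length, and part (ii) invokes the modelling assumption that those moments are unimodal in $\cog$ with their extremum at $\cog^*$ (Fig.~\ref{fig:mean_sojurn_cog}). You are in fact more careful than the paper, which disposes of part (ii) in one sentence by citing the figure for all actions at once, whereas you separately justify the rest action (via monotonicity of the first-passage-time moments on $\cog>\cog^*$, where $\cog^*$ itself is not in the admissible domain) and the constant-sojourn actions \textit{S} and \textit{W} --- details the paper leaves implicit.
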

\begin{proof}
See Appendix~\ref{app_lemma_reward} for the proof.
\end{proof}
We now provide important mathematical results in Lemma~\ref{app:1} which we use to establish Lemma~\ref{lemma:value_function_bound}.
\begin{lemma}\label{app:1}
For the SMDP $\Gamma$, the following equations hold for any initial state $s_0$ and policy $\pi$:
\begin{align*}
  (i) \E_{{\pi}}\left[\sum_{k=0}^{n-1}\gamma^{\zeta_k}\E[\tau_k^2|\cog_k,{a}_k] \right]= \sum_{k=0}^{n-1}\E_{\pi} \left[ \gamma^{\zeta_k}\tau_k^2\right];
\end{align*}
\begin{multline}
(ii) \E_{{\pi}}\left[\sum_{k=0}^{n-1}\gamma^{\zeta_k}\E[\tau_k|\cog_k,{a}_k]{q}_k  \right]=   \nonumber \\  \sum_{k=0}^{n-1}\E_{\pi} \left[ \gamma^{\zeta_k}\tau_k \E_{\pi}\left[{q}_k| s_0, \zeta_k\right] \right].
\end{multline}
\end{lemma}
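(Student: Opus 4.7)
My plan is to prove both identities by repeated application of the tower property (law of iterated expectations), together with linearity of expectation for exchanging the finite sum and the outer expectation. Let $\mathcal{F}_k$ denote the history sigma-algebra generated by $(s_0,a_0,\tau_0,s_1,a_1,\tau_1,\ldots,s_k,a_k)$. Then $\gamma^{\zeta_k}$ and $q_k$ are both $\mathcal{F}_k$-measurable, and by the Markovian/independence structure of the SMDP the conditional distribution of the sojourn time $\tau_k$ given $\mathcal{F}_k$ coincides with its distribution given only $(\cog_k,a_k)$, i.e.\ $\E[\tau_k^{\,p}\mid \mathcal{F}_k]=\E[\tau_k^{\,p}\mid \cog_k,a_k]$ for $p\in\{1,2\}$.

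For part~(i), I would swap the sum and the outer expectation by linearity, pull the $\mathcal{F}_k$-measurable factor $\gamma^{\zeta_k}$ inside the conditional expectation, and collapse via tower to obtain $\E_{\pi}[\gamma^{\zeta_k}\E[\tau_k^2\mid \cog_k,a_k]]=\E_{\pi}[\E[\gamma^{\zeta_k}\tau_k^2\mid \mathcal{F}_k]]=\E_{\pi}[\gamma^{\zeta_k}\tau_k^2]$ term by term; reassembling the sum yields the identity. For part~(ii), the same manipulation applied to the left-hand side gives
\begin{equation*}
\E_{\pi}\!\left[\sum_{k=0}^{n-1}\gamma^{\zeta_k}\E[\tau_k\mid \cog_k,a_k]\,q_k\right]
=\sum_{k=0}^{n-1}\E_{\pi}\!\left[\gamma^{\zeta_k}\tau_k q_k\right],
\end{equation*}
since $q_k\gamma^{\zeta_k}$ is $\mathcal{F}_k$-measurable and can be passed through the inner conditional expectation. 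To reach the stated right-hand side, I would then condition on the coarser sigma-algebra $\sigma(s_0,\zeta_k)$ and argue that $q_k$ and $\tau_k$ are conditionally independent given $(s_0,\zeta_k)$. The Poisson arrival stream is specified to be independent of the sojourn-time process and the cognitive-state transitions, so conditioning on the elapsed time $\zeta_k$ (rather than on its individual summands $\tau_0,\ldots,\tau_{k-1}$) decouples the arrival count driving $q_k$ from $\tau_k$; combining this conditional independence with the tower property would factor $\E_{\pi}[\tau_k q_k\mid s_0,\zeta_k]=\E_{\pi}[\tau_k\mid s_0,\zeta_k]\,\E_{\pi}[q_k\mid s_0,\zeta_k]$, and reabsorbing $\tau_k$ via one more use of iterated expectations recovers the summand $\E_{\pi}[\gamma^{\zeta_k}\tau_k\E_{\pi}[q_k\mid s_0,\zeta_k]]$.

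The main obstacle is precisely this conditional independence step in part~(ii): under a general Markovian policy the action $a_k=\pi(q_k,\cog_k)$ does depend on $q_k$, so the distribution of $\tau_k$ is not manifestly independent of $q_k$ after conditioning only on $(s_0,\zeta_k)$. Making the decoupling rigorous therefore cannot be done by generic probability theory alone; it must invoke the specific structure of SMDP $\Gamma$, notably the independence of the Poisson arrivals from the sojourn-time and cognitive-state processes and the role of assumption (A1), which forces the queue decrement per step to be determined by the action alone rather than by contingencies of being empty. In contrast to the essentially mechanical tower-property manipulations in part~(i), carefully justifying this factorization, and handling the interplay between the policy, the cognitive-state dynamics within each sojourn, and the Poisson arrivals, is where I expect the bulk of the technical care to lie.
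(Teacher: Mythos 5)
Your approach is essentially the paper's own. Part (i) in the paper is the same tower-property computation, written out as explicit sums over joint probability mass functions ($\mathbb{P}(\tau_k\mid\cog_k,a_k)\,\mathbb{P}(\zeta_k,\cog_k\mid s_0,\pi)=\mathbb{P}(\tau_k,\cog_k,\zeta_k\mid s_0,\pi)$, then marginalize over $\cog_k$) rather than with sigma-algebras; the content is identical. For part (ii) the paper likewise first reduces the left-hand side to $\sum_{k}\E_{\pi}\left[\gamma^{\zeta_k}\tau_k q_k\right]$ and then reaches the stated right-hand side by replacing $\mathbb{P}(q_k\mid\tau_k,\zeta_k,s_0,\pi)$ with $\mathbb{P}(q_k\mid\zeta_k,s_0,\pi)$ inside the sum --- that is, by exactly the conditional-independence factorization you single out as the main obstacle. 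The paper performs this substitution in a single unexplained line: it gives no argument for why $q_k$ and $\tau_k$ decouple given $(s_0,\zeta_k)$, even though, as you correctly note, $\tau_k$ is drawn from a distribution indexed by $a_k=\pi(q_k,\cog_k)$ and hence is not manifestly independent of $q_k$ under that conditioning. So your hesitation does not reflect a gap in your attempt relative to the paper; it identifies a step the paper asserts rather than proves. If you want to close it, the route you sketch (independence of the Poisson arrival stream from the service and cognitive dynamics, together with (A1) forcing the per-stage queue decrement to be uniform across the serving actions) is the only plausible one, but you should present the factorization as an explicit structural assumption on $\Gamma$ rather than as a consequence of generic probability, since the policy's dependence on $q_k$ is not eliminated by conditioning on $(s_0,\zeta_k)$ alone.
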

\begin{proof}
See Appendix~\ref{app_app} for the proof.
\end{proof}
\begin{lemma}\label{lemma:value_function_bound}(\bit{Value function bounds}):
For SMDP $\Gamma$ under assumptions {(A1-A3)}, for any $\tilde{q}_0 \ge q_0$,
 $0 \le \frac{ct_s\Delta q}{1-\gamma^{t_{\max}}} \le V^*(q_0,\cog_0)-V^*(\tilde{q}_0,\cog_0) \le {\frac{ct_{\max}\Delta q}{1 - \rho}}$, where $\Delta q = \tilde{q}_0-q_0$, {$\rho$ is an upper bound on $\E_{\pi}[\gamma^{\tau}]$,} $t_{\max} =\E[\tau | \cog=1, a=H] $ is the maximum expected sojourn time, and $t_s$ is the constant time for skip.
\end{lemma}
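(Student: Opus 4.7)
The plan is to prove both bounds through a coupled-trajectory Bellman argument, working at the finite-horizon level $V_n^*$ from~\eqref{eq:61} and passing to the limit $n\to\infty$. I would couple two SMDP sample paths starting from $(q_0,\cog_0)$ and $(\tilde q_0,\cog_0)$ using a common realization of task arrivals, sojourn-time samples, and cognitive-state transitions, and apply the same action in both at every epoch. Under (A1) neither queue empties, and since $L\to\infty$ there is no upper saturation, so the queue-length gap $\Delta q$ and the equality of cognitive states are preserved step by step. Because $R(s,a)$ in~\eqref{eq:1} depends on $q$ only through the linear term $-c\,\E[\tau|\cog,a]\,q$, under any common action the one-step reward gap equals $c\,\E[\tau|\cog,a]\,\Delta q$ exactly.

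For the upper bound, pick $a^* = \pi^*(q_0,\cog_0)$; the Bellman equation holds at $(q_0,\cog_0)$ while only the Bellman inequality with action $a^*$ is available at $(\tilde q_0,\cog_0)$. Subtracting and iterating $n$ times, using the terminal-cost difference $V_0^*(q_n,\cog_n)-V_0^*(\tilde q_n,\cog_n)=C\Delta q$ under the coupling, yields
\begin{align*}
&V_n^*(q_0,\cog_0)-V_n^*(\tilde q_0,\cog_0) \\
&\le c\Delta q\sum_{k=0}^{n-1}\E_{\pi^*}\!\bigl[\gamma^{\zeta_k}\E[\tau_k|\cog_k,a_k]\bigr]+C\Delta q\,\E_{\pi^*}[\gamma^{\zeta_n}].
\end{align*}
Bounding $\E[\tau_k|\cog_k,a_k]\le t_{\max}$ pointwise and using the consequence $\E_{\pi^*}[\gamma^{\zeta_k}]\le\rho^k$ of (A3) derived just above Theorem~\ref{thm:thm1}, the boundary term vanishes as $n\to\infty$ and the summation is at most $ct_{\max}\Delta q/(1-\rho)$.

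For the lower bound, I mirror the argument with $\tilde a^*=\pi^*(\tilde q_0,\cog_0)$ applied at $(q_0,\cog_0)$, which gives
\begin{equation*}
V^*(q_0,\cog_0)-V^*(\tilde q_0,\cog_0)\ge c\Delta q\sum_{k=0}^{\infty}\E_{\tilde\pi^*}\!\bigl[\gamma^{\zeta_k}\E[\tau_k|\cog_k,a_k]\bigr].
\end{equation*}
Assumption (A2) forces $\E[\tau_k|\cog_k,a_k]\ge t_s$ since skip attains the minimum expected sojourn time with deterministic value $t_s$, and Jensen's inequality applied to the convex map $x\mapsto\gamma^x$ together with $\E[\zeta_k]\le k\,t_{\max}$ gives $\E[\gamma^{\zeta_k}]\ge\gamma^{kt_{\max}}$. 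Summing the geometric series delivers the advertised bound $ct_s\Delta q/(1-\gamma^{t_{\max}})$; its positivity also yields $V^*(q_0,\cog_0)-V^*(\tilde q_0,\cog_0)\ge 0$.

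The main technical obstacle is executing the coupling and iteration rigorously. Three points need care: (i) the per-step reward gap collapses exactly to $c\,\E[\tau|\cog,a]\,\Delta q$ only because $r(s,a)$ and $\E[\tau^2|\cog,a]$ are $q$-independent; (ii) the action $a^*$ must remain admissible at the paired state, which holds because admissibility depends only on $\cog$ and on nonemptiness, both shared under the coupling; and (iii) the terminal correction $C\Delta q\,\E[\gamma^{\zeta_n}]$ vanishes as $n\to\infty$ via $\E[\gamma^{\zeta_n}]\le\rho^n\to 0$. A final routine step is pulling $\E[\tau_k|\cog_k,a_k]$ out of the outer expectation through the tower property, valid because $\zeta_k$ depends on the history up through step $k-1$ while $\tau_k$ is conditionally independent of that history given $(\cog_k,a_k)$, so that the pointwise bounds $t_s$ and $t_{\max}$ can be applied before the summation.
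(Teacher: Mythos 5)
Your proposal is correct and follows essentially the same route as the paper's proof in Appendix~C: the paper implements your coupling via the shifted policy $\tilde{\pi}(\overline{q},\overline{\cog})=\pi(\overline{q}+q_0-\tilde{q}_0,\overline{\cog})$, compares $V_n^*$ against the induced suboptimal return $J_{n,\tilde{\pi}}$ (and symmetrically for the lower bound), and uses the identical estimates $\E[\tau_k|\cog_k,a_k]\le t_{\max}$ with $\E_{\pi}[\gamma^{\zeta_k}]\le\rho^k$ for the upper bound and $\E[\tau_k|\cog_k,a_k]\ge t_s$ with Jensen's inequality $\E[\gamma^{\zeta_k}]\ge\gamma^{\E[\zeta_k]}\ge\gamma^{kt_{\max}}$ for the lower bound. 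The tower-property step you flag as routine is exactly what the paper isolates as Lemma~\ref{app:1}, so no gap remains.
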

\begin{proof}
See Appendix~\ref{app_value_function_bound} for the proof.
\end{proof}
\begin{remark}
It follows from Lemma~\ref{lemma:value_function_bound}, that for SMDP $\Gamma$ under assumptions {(A1-A3)}, the optimal value function $V^*(q, \cdot)$ is monotonically decreasing with queue length $q$.
\end{remark}
 \begin{lemma}\label{lemma:threshold}(\bit{Thresholds for low cognitive states}):
For the SMDP $\Gamma$ under assumptions (A1-A3), and an associated optimal policy $\pi^*$, the following statements hold for each $\cog\le \cog^*$: 
\begin{enumerate}
    \item[(i)]  there exists a threshold function $q_1^*(\cog)$, such that the action \textit{N} strictly dominates action \textit{H}, for each $q > q_1^*(\cog)$ if
\begin{multline*}
      \mathbb{E}[\tau|\cog,H]-\mathbb{E}[\tau|\cog,N] + \frac{t_s\gamma^{\E[\tau|\cog,H]}}{1-\gamma^{t_{\max}}}  \\
    \ge \frac{t_{\max}}{1 - \rho}\E[\gamma^{\tau}| \cog , N];
\end{multline*} 
\item[(ii)] there exists a threshold function $q_2^*(\cog)$, such that for each $q > q_2^*(\cog)$, action \textit{S} is optimal if 
\begin{multline*}
\mathbb{E}[\tau|\cog,N]-t_s + \frac{t_s\gamma^{\E[\tau|\cog,H]}}{1-\gamma^{t_{\max}}}  \ge \gamma^{t_s} \frac{t_{\max}}{1 - \rho}.
\end{multline*}
\end{enumerate}
\end{lemma}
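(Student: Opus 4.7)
My approach is to work with the state--action values
$$Q(s,a) := R(s,a) + \sum_{s',\tau} \gamma^{\tau}\mathbb{P}(s',\tau \mid s, a) V^{*}(s'),$$
and to show, under the respective hypothesis, that the appropriate $Q$-differences are non-decreasing in the queue length $q$ for each fixed cognitive state $\cog$. Once that monotonicity is established, defining $q_1^{*}(\cog)$ (respectively $q_2^{*}(\cog)$) as the largest $q$ at which $N$ does not strictly dominate $H$ (respectively at which $S$ is not optimal), and setting it to $+\infty$ if no such transition occurs, yields the claimed threshold structure automatically.

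For part (i), the key computation is the discrete one-step difference in $q$,
$$\Delta_q := [Q((q{+}1,\cog),N) - Q((q{+}1,\cog),H)] - [Q((q,\cog),N) - Q((q,\cog),H)].$$
Using~\eqref{eq:1}, the reward contribution is exactly $c(\E[\tau\mid\cog,H]-\E[\tau\mid\cog,N])>0$ by assumption (A2). For the future-value contribution, I observe that the post-action queue length starting from $(q{+}1,\cog)$ is pointwise one greater than that starting from $(q,\cog)$ under the same realization of sojourn time, Poisson arrivals, and cognitive transitions. Hence the contribution telescopes to $\E[\gamma^{\tau}g(q',\cog')\mid\cog,H]-\E[\gamma^{\tau}g(q',\cog')\mid\cog,N]$, where $g(q',\cog') := V^{*}(q',\cog') - V^{*}(q'{+}1,\cog')$ is the one-step drop in the optimal value function. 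By Lemma~\ref{lemma:value_function_bound}, $g$ is deterministically confined to $[\tfrac{ct_s}{1-\gamma^{t_{\max}}},\,\tfrac{ct_{\max}}{1-\rho}]$. Applying Jensen's inequality to the convex map $\tau\mapsto\gamma^{\tau}$ then yields $\E[\gamma^{\tau}g\mid\cog,H] \ge \gamma^{\E[\tau\mid\cog,H]}\tfrac{ct_s}{1-\gamma^{t_{\max}}}$ and $\E[\gamma^{\tau}g\mid\cog,N] \le \E[\gamma^{\tau}\mid\cog,N]\tfrac{ct_{\max}}{1-\rho}$. Substituting these into $\Delta_q$ reproduces exactly the hypothesis of~(i), so $\Delta_q\ge 0$.

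For part (ii), I run the same one-step comparison for the pairs $(S,N)$ and $(S,H)$. Under $S$ the sojourn time is the constant $t_s$ and the cognitive state is fixed, so $\E[\gamma^{\tau}g\mid\cog,S] \le \gamma^{t_s}\tfrac{ct_{\max}}{1-\rho}$, with no Jensen step needed on this side. The same recipe produces, for each $a\in\{N,H\}$, the sufficient condition $(\E[\tau\mid\cog,a]-t_s)+\tfrac{t_s\gamma^{\E[\tau\mid\cog,a]}}{1-\gamma^{t_{\max}}}\ge\gamma^{t_s}\tfrac{t_{\max}}{1-\rho}$. Because (A2) gives $\E[\tau\mid\cog,N]<\E[\tau\mid\cog,H]$ and $\gamma\in(0,1)$, replacing $\E[\tau\mid\cog,a]$ by the smaller $\E[\tau\mid\cog,N]$ in the first summand and by the smaller $\gamma^{\E[\tau\mid\cog,H]}$ in the exponent of the second summand only strengthens the requirement; this is precisely hypothesis~(ii). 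Hence~(ii) implies non-negativity of both one-step differences simultaneously, so $Q(\cdot,S)-Q(\cdot,N)$ and $Q(\cdot,S)-Q(\cdot,H)$ are each non-decreasing in $q$, and $q_2^{*}(\cog)$ can be taken as the maximum of the two associated thresholds.

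The main technical hurdle is handling the joint distribution of the random sojourn time $\tau$, the Poisson arrival count, and the cognitive-state transition, all of which enter the future-value expectation together. The saving grace is Lemma~\ref{lemma:value_function_bound}: it bounds the one-step value drop $g(q',\cog')$ \emph{pointwise and deterministically}, which decouples the $\gamma^{\tau}$ factor from the value-function drop and allows Jensen's inequality to close the estimate without having to integrate out the joint law of $(\tau,q',\cog')$.
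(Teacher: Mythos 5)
Your proposal is correct and follows essentially the same route as the paper's proof: both compare the Bellman (Q-)differences at two queue lengths for a fixed cognitive state, couple the arrival and cognitive-state processes so that the future-value terms reduce to the value-function drop bounded by Lemma~\ref{lemma:value_function_bound}, apply Jensen's inequality to $\gamma^{\tau}$ on the longer-sojourn action, and combine the two pairwise conditions for the skip action using monotonicity of the expected sojourn times. The only cosmetic difference is that you phrase it as a one-step increment $q\to q+1$ of the Q-difference, whereas the paper compares general $q_1<q_2$; the computations are identical.
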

\begin{proof}
See Appendix~\ref{app_threshold1} for the proof.
\end{proof}
 \begin{lemma}\label{lemma:threshold_2}(\bit{Thresholds for high cognitive states}):
For the SMDP $\Gamma$ under assumptions (A1-A3), and an associated optimal policy $\pi^*$, the following statements hold for each $\cog> \cog^*$: \begin{enumerate}
       \item[(i)]  there exists a threshold function $q_1^*(\cog)$, such that the action \textit{N} strictly dominates action \textit{H}, for each $q > q_1^*(\cog)$ if
    \begin{multline*}
       \mathbb{E}[\tau|\cog,H]-\mathbb{E}[\tau|\cog,N] + \frac{t_s\gamma^{\E[\tau|\cog,H]}}{1-\gamma^{t_{\max}}} \\ \ge \frac{t_{\max}}{1 - \rho}\E[\gamma^{\tau}| \cog , N];
    \end{multline*}
    \item[(ii)] there exists a threshold function $q_2^*(\cog)$, such that the action \textit{R} strictly dominates actions \textit{H} and \textit{N}, for each $q > q_2^*(\cog)$ if
    \begin{multline*}
        \mathbb{E}[\tau|\cog,N]-\mathbb{E}[\tau|\cog,R] + \frac{t_s\gamma^{\E[\tau|\cog,H]}}{1-\gamma^{t_{\max}}} \\ \ge \frac{t_{\max}}{1 - \rho}\mathbb{E}[\gamma^\tau|\cog,R].
    \end{multline*}
\item[(iii)] there exists a threshold function $q_3^*(\cog)$, such that for each $q > q_3^*(\cog)$, action \textit{S} is optimal if 
\begin{multline*}
    \mathbb{E}[\tau|\cog,R]-t_s + \frac{t_s\gamma^{\E[\tau|\cog,H]}}{1-\gamma^{t_{\max}}}  \ge \gamma^{t_s}\frac{t_{\max}}{1 - \rho}.
\end{multline*}
\end{enumerate}
\end{lemma}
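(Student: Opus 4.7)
The plan is to adapt the argument of Lemma~\ref{lemma:threshold} to the high-cognitive-state setting, where the rest action $R$ is also available, by carrying out three pairwise action comparisons. For each action $a$, define the state-action value
\begin{equation*}
Q^*(s,a) \;=\; R(s,a) + \sum_{s',\tau}\gamma^{\tau}\mathbb{P}(s',\tau\,|\,s,a)\,V^*(s'),
\end{equation*}
so that an optimal action at $s$ maximizes $Q^*(s,\cdot)$. I will show that, under the hypothesis of each part, the pairwise difference $Q^*(s,a_1)-Q^*(s,a_2)$ is nondecreasing in $q$ and becomes strictly positive for $q$ above a unique threshold $q_j^*(\cog)$; the three thresholds then stack in the claimed order.

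For part (i), expanding $R(s,N)-R(s,H)$ via~\eqref{eq:1} yields a term that grows linearly in $q$ with positive slope $c\bigl(\mathbb{E}[\tau|\cog,H]-\mathbb{E}[\tau|\cog,N]\bigr)$ by assumption (A2). The continuation value under $H$ reaches a state whose queue length is stochastically larger than under $N$ (longer sojourn means more Poisson arrivals), while Lemma~\ref{lemma:value_function_bound} controls $V^*(q,\cog_0)-V^*(\tilde q,\cog_0)$ from below by $ct_s\Delta q/(1-\gamma^{t_{\max}})$ and from above by $ct_{\max}\Delta q/(1-\rho)$. Combining the lower bound on $R(s,N)-R(s,H)$ with the upper bound on the future-value advantage of $H$, and absorbing the extra discounted queue-reduction credit $ct_s\gamma^{\mathbb{E}[\tau|\cog,H]}/(1-\gamma^{t_{\max}})$ obtained because the $H$-sojourn permits one additional task to be dispatched at the bounded slope, reproduces exactly the hypothesized inequality and gives the unique threshold $q_1^*(\cog)$. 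Parts (ii) and (iii) follow by the same template applied to the pairs $(R,N)$ and $(S,R)$, with $\mathbb{E}[\gamma^\tau|\cog,R]$ and $\gamma^{t_s}$ appearing on the right-hand sides respectively, and dominance of $R$ over $H$ in (ii) is inherited by transitivity from part (i) together with the ordering in (A2).

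The main obstacle is controlling the continuation-value difference tightly enough, because the post-action states differ not only in their random queue lengths but also in their cognitive coordinate: Table~\ref{tab:tab1} shows that $H$ drives $\cog$ upward faster than $N$, while $R$ pulls $\cog$ toward $\cog^*$. I plan to split the continuation-value difference into a queue-only part, handled directly by Lemma~\ref{lemma:value_function_bound}, and a cognitive-only part, and to bound the latter uniformly using assumption (A3) together with Lemma~\ref{app:1} and the bound $\sum_{k\ge 0}\mathbb{E}_\pi[\gamma^{\zeta_k}]\le 1/(1-\rho)$ already derived in the excerpt. Monotonicity of the pairwise differences in $q$, required for \emph{uniqueness} of the threshold rather than merely an eventual sign change, will follow by combining the strictly linear-in-$q$ immediate-reward contribution with the monotone queue-slope of $V^*$ supplied by Lemma~\ref{lemma:value_function_bound}. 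Finally, the infinite-horizon statement is obtained from the finite-horizon recursion by the standard limit $V_n^*\to V^*$ for discounted SMDPs, the same device used in the antecedent lemmas.
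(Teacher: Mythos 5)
Your skeleton---pairwise comparison of state--action values, a linear-in-$q$ immediate-reward gap with slope $c(\E[\tau|\cog,a_2]-\E[\tau|\cog,a_1])>0$ from (A2), and the value-function slope bounds of Lemma~\ref{lemma:value_function_bound} plus Jensen's inequality---is exactly the device the paper uses: its proof of this lemma is declared ``analogous to Lemma~\ref{lemma:threshold},'' whose appendix proof subtracts the action-value difference at $q_1$ from the one at $q_2>q_1$ and lower-bounds the increment. However, the obstacle you single out as ``the main obstacle'' is not one, and your proposed remedy for it is unsupported. In the paper's difference-of-differences, the value function appears only as $V_D=V^*(\cog',q_1+\overline q-1)-V^*(\cog',q_2+\overline q-1)$, i.e., at the \emph{same} cognitive state $\cog'$ and two queue lengths; the differing cognitive kernels of $H$, $N$, $R$ enter only as the weights $\mathbb{P}(\cog',\tau|\cog,a)$ multiplying this quantity, which is bounded uniformly in $\cog'$ by Lemma~\ref{lemma:value_function_bound}. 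There is no ``cognitive-only part'' to bound, and if your decomposition genuinely required a bound on $V^*(q,\cog)-V^*(q,\cog')$ across cognitive states, you would be stuck: neither (A3) nor Lemma~\ref{app:1} nor anything else in the paper controls that difference (the value function is only claimed to be unimodal in $\cog$, with no quantitative modulus).

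The second genuine problem is your claim that dominance of $R$ over $H$ in part (ii) is ``inherited by transitivity from part (i) together with the ordering in (A2).'' Part (ii) assumes only its own inequality, not the inequality of part (i), so you are not entitled to $N$ dominating $H$ as an intermediate link; transitivity of ``dominates for all $q$ above a threshold'' is simply unavailable under the stated hypothesis. The correct route---the one the paper uses at the analogous step \eqref{eq:165new}--\eqref{eq:166new} of Lemma~\ref{lemma:threshold}---is to run the pairwise $R$-versus-$H$ comparison directly, obtaining the sufficient condition $\E[\tau|\cog,H]-\E[\tau|\cog,R]+t_s\gamma^{\E[\tau|\cog,H]}/(1-\gamma^{t_{\max}})\ge \tfrac{t_{\max}}{1-\rho}\E[\gamma^{\tau}|\cog,R]$, and then observe that this is \emph{implied} by the stated $R$-versus-$N$ condition because $\E[\tau|\cog,H]>\E[\tau|\cog,N]$ by (A2). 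The same repair is needed in part (iii), where $S$ must dominate all of $H$, $N$, and $R$, and the stated condition (the $S$-versus-$R$ one) is the weakest of the three pairwise conditions and hence implies the other two.
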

\begin{proof}
Recall that $\mc A_s := \{\{$\textit{R}, \textit{S}, \textit{N}, \textit{H} \}$| \ s \in \mc{S}, \ q\neq 0 \}$ when queue is non-empty and $\cog>\cog^*$. The proof of Lemma~\ref{lemma:threshold_2} follows analogously to the proof of Lemma~\ref{lemma:threshold}.
\end{proof}
\textit{Proof of Theorem~\ref{thm:thm1}:}
The proof follows by finding the intersection of the sufficient conditions from Lemmas~\ref{lemma:threshold} and~\ref{lemma:threshold_2} to get the condition: 
\begin{multline}{\label{eq:167newww}}
\min\{\mathbb{E}[\tau|\cog,H]-\mathbb{E}[\tau|\cog,N],  \ \mathbb{E}[\tau|\cog,N]- \\ \mathbb{E}[\tau|\cog,R], \ \mathbb{E}[\tau|\cog,R]-t_s\} + \\ \frac{t_s\gamma^{\E[\tau| \cog, H]}}{1-\gamma^{t_{\max}}}  \ge \frac{t_{\max}}{1 - \rho} \max_{a \in \mc A_{s}}\E[\gamma^{\tau}| \cog, a],
\end{multline}
under which the optimal policy $\pi^*$ satisfies Theorem~\ref{thm:thm1}.

\section{Conclusions and Future Directions} \label{Conclusions}
We studied optimal fidelity selection for a human operator servicing a stream of homogeneous tasks using an SMDP framework. In particular, we studied the influence of human cognitive dynamics on an optimal fidelity selection policy. We presented numerical illustrations of the optimal policy and established its structural properties. These structural properties can be leveraged to tune the design parameters, deal with the model uncertainty, or determine a minimally parameterized policy for specific individuals and tasks.\\ 
There are several possible avenues for future research. An interesting direction is to conduct experiments with human subjects, measure EEG signals to assess their cognitive state and test the benefits of recommending optimal fidelity levels. It is of interest to extend this work to a team of human operators servicing a stream of heterogeneous tasks. A preliminary setup is considered in~\cite{gupta2019achieving,gupta2022incentivizing}, where authors study a game-theoretic approach to incentivize collaboration in a team of heterogeneous agents. In such a setting, finding the optimal routing and scheduling strategies for these heterogeneous tasks is also of interest. 
\begin{ack}                               
This work has been supported by NSF Award IIS-1734272 and ECCS-2024649. 
\end{ack}
\bibliographystyle{ieeetr}
\bibliography{autosam}           

\begin{thebibliography}{10}

\bibitem{gupta2019optimal}
P.~Gupta and V.~Srivastava, ``Optimal fidelity selection for human-in-the-loop
  queues using semi-{M}arkov decision processes,'' in {\em 2019 American
  Control Conference (ACC)}, pp.~5266--5271, IEEE, 2019.

\bibitem{KS-EF:10b}
K.~Savla and E.~Frazzoli, ``A dynamical queue approach to intelligent task
  management for human operators,'' {\em Proceedings of the IEEE}, vol.~100,
  no.~3, pp.~672--686, 2012.

\bibitem{nourbakhsh2005human}
I.~R. Nourbakhsh, K.~Sycara, M.~Koes, M.~Yong, M.~Lewis, and S.~Burion,
  ``Human-robot teaming for search and rescue,'' {\em IEEE Pervasive
  Computing}, vol.~4, no.~1, pp.~72--78, 2005.

\bibitem{yuan2017evaluating}
M.~Yuan~Zhang and X.~Jessie~Yang, ``Evaluating effects of workload on trust in
  automation, attention allocation and dual-task performance,'' in {\em
  Proceedings of the Human Factors and Ergonomics Society Annual Meeting},
  vol.~61, pp.~1799--1803, SAGE Publications Sage CA: Los Angeles, CA, 2017.

\bibitem{agarwal2008structural}
M.~Agarwal, V.~S. Borkar, and A.~Karandikar, ``Structural properties of optimal
  transmission policies over a randomly varying channel,'' {\em IEEE
  Transactions on Automatic Control}, vol.~53, no.~6, pp.~1476--1491, 2008.

\bibitem{yerkes1908relation}
R.~M. Yerkes and J.~D. Dodson, ``The relation of strength of stimulus to
  rapidity of habit-formation,'' {\em Journal of Comparative Neurology and
  Psychology}, vol.~18, no.~5, pp.~459--482, 1908.

\bibitem{keyvan2012exploiting}
M.~Keyvan-Ekbatani, A.~Kouvelas, I.~Papamichail, and M.~Papageorgiou,
  ``Exploiting the fundamental diagram of urban networks for feedback-based
  gating,'' {\em Transportation Research Part B: Methodological}, vol.~46,
  no.~10, pp.~1393--1403, 2012.

\bibitem{JP-VS-etal:12t}
J.~Peters, V.~Srivastava, G.~Taylor, A.~Surana, M.~P. Eckstein, and F.~Bullo,
  ``Human supervisory control of robotic teams: Integrating cognitive modeling
  with engineering design,'' {\em IEEE Control System Magazine}, vol.~35,
  no.~6, pp.~57--80, 2015.

\bibitem{peters2018robust}
J.~R. Peters, A.~Surana, and F.~Bullo, ``Robust scheduling and routing for
  collaborative human/unmanned aerial vehicle surveillance missions,'' {\em
  Journal of Aerospace Information Systems}, pp.~1--19, 2018.

\bibitem{VS-RC-CL-FB:11zc}
V.~Srivastava, R.~Carli, C.~Langbort, and F.~Bullo, ``Attention allocation for
  decision making queues,'' {\em Automatica}, vol.~50, no.~2, pp.~378--388,
  2014.

\bibitem{lin2021stabilizing}
M.~Lin, R.~J. La, and N.~C. Martins, ``Stabilizing a queue subject to
  activity-dependent server performance,'' {\em IEEE Transactions on Control of
  Network Systems}, vol.~8, no.~4, pp.~1579--1591, 2021.

\bibitem{lin2020queueing}
M.~Lin, N.~C. Martins, and R.~J. La, ``Queueing subject to action-dependent
  server performance: Utilization rate reduction,'' {\em arXiv preprint
  arXiv:2002.08514}, 2020.

\bibitem{gupta2021robust}
P.~Gupta and V.~Srivastava, ``On robust and adaptive fidelity selection for
  human-in-the-loop queues,'' in {\em 2021 European Control Conference (ECC)},
  pp.~872--877, IEEE, 2021.

\bibitem{stidham1989monotonic}
S.~Stidham~Jr and R.~R. Weber, ``Monotonic and insensitive optimal policies for
  control of queues with undiscounted costs,'' {\em Operations Research},
  vol.~37, no.~4, pp.~611--625, 1989.

\bibitem{sennott1989average}
L.~I. Sennott, ``Average cost semi-{M}arkov decision processes and the control
  of queueing systems,'' {\em Probability in the Engineering and Informational
  Sciences}, vol.~3, no.~2, pp.~247--272, 1989.

\bibitem{yang2013structural}
R.~Yang, S.~Bhulai, and R.~van~der Mei, ``Structural properties of the optimal
  resource allocation policy for single-queue systems,'' {\em Annals of
  Operations Research}, vol.~202, no.~1, pp.~211--233, 2013.

\bibitem{rao2013brain}
R.~P. Rao, {\em Brain-Computer Interfacing: An Introduction}.
\newblock Cambridge University Press, 2013.

\bibitem{palinko2010estimating}
O.~Palinko, A.~L. Kun, A.~Shyrokov, and P.~Heeman, ``Estimating cognitive load
  using remote eye tracking in a driving simulator,'' in {\em Proceedings of
  the 2010 Symposium on Eye-tracking Research \& Applications}, pp.~141--144,
  2010.

\bibitem{littman1995learning}
M.~L. Littman, A.~R. Cassandra, and L.~P. Kaelbling, ``Learning policies for
  partially observable environments: Scaling up,'' in {\em Machine Learning
  Proceedings 1995}, pp.~362--370, Elsevier, 1995.

\bibitem{spaan2012partially}
M.~T. Spaan, ``Partially observable markov decision processes,'' in {\em
  Reinforcement Learning}, pp.~387--414, Springer, 2012.

\bibitem{diederich2003simple}
A.~Diederich and J.~R. Busemeyer, ``Simple matrix methods for analyzing
  diffusion models of choice probability, choice response time, and simple
  response time,'' {\em Journal of Mathematical Psychology}, vol.~47, no.~3,
  pp.~304--322, 2003.

\bibitem{barto2003recent}
A.~G. Barto and S.~Mahadevan, ``Recent advances in hierarchical reinforcement
  learning,'' {\em Discrete Event Dynamic Systems}, vol.~13, no.~1-2,
  pp.~41--77, 2003.

\bibitem{sutton2018reinforcement}
R.~S. Sutton and A.~G. Barto, {\em Reinforcement Learning: An Introduction}.
\newblock MIT press, 2018.

\bibitem{gupta2019achieving}
P.~Gupta, S.~D. Bopardikar, and V.~Srivastava, ``Achieving efficient
  collaboration in decentralized heterogeneous teams using common-pool resource
  games,'' in {\em IEEE Conference on Decision and Control}, pp.~6924--6929,
  2019.

\bibitem{gupta2022incentivizing}
P.~Gupta, S.~D. Bopardikar, and V.~Srivastava, ``Incentivizing collaboration in
  heterogeneous teams via common-pool resource games,'' {\em IEEE Transactions
  on Automatic Control}, 2022.

\bibitem{dekking2005modern}
F.~M. Dekking, C.~Kraaikamp, H.~P. Lopuha{\"a}, and L.~E. Meester, {\em A
  Modern Introduction to Probability and Statistics: Understanding Why and
  How}.
\newblock Springer Science \& Business Media, 2005.

\end{thebibliography}

.

\appendix
\section{Proof of Lemma~\ref{lemma:reward} [\bit{Immediate Reward}]}\label{app_lemma_reward}
We start by establishing the first statement. 
Recall that the expected net immediate reward
received by the human operator for selecting action $a$ in state $s$ is given by~\eqref{eq:1}.
%
We note that the moments of the sojourn time distribution are independent of the queue length $q$. Therefore,~\eqref{eq:1} can be re-written as:
\begin{equation}
R(s,a) = -a_1(\cog,a)q + a_2(\cog, a),
\label{eq:5}
\end{equation}
 where $a_1 (\cog,a)=c\E\left[\tau|s,a\right]$ and $a_2(\cog,a)=r(s, a) - \frac{c\lambda}{2}\E\left[\tau^2|s,a\right]$.\\
 For a fixed cognitive state $\cog$ and action $a$, both $a_1$ and $a_2$ are constants and therefore, the expected immediate reward linearly decreases with the queue length $q$ and the first statement follows.\\
 The second statement follows by observing that, for a given queue length $q$, the mean and variance of the sojourn time for each action $a \in A_s$ are unimodal functions of the cognitive state with their peaks at  $\cog^*$ (Fig.~\ref{fig:mean_sojurn_cog}).

$\hfill \oprocend$
\section{Proof of Lemma~\ref{app:1}}\label{app_app} 
For brevity of notation, let $\sum_{a}\sum_{b}\sum_{c}\sum_{d}(\cdot)$ be denoted by $\sum_{\Lambda}(\cdot)$, where $\Lambda :=\{a,b,c,d\}$ for any arbitrary $a,b,c,d$. We start by establishing the first statement. Let $T := \E_{{\pi}}\left[\sum_{k=0}^{n-1}\gamma^{\zeta_k}\E[\tau_k^2|\cog_k,{a}_k] \right]$ and $\Delta := \{k, \zeta_k, \cog_k, \tau_k\}$. We have
\begin{align*}
T&=\sum_{k=0}^{n-1}\E_{{\pi}}\left[ \gamma^{\zeta_k}\E[\tau_k^2|\cog_k,{a}_k]  \right] \\
&=\sum_{k=0}^{n-1}\E_{{\pi}}\left[ \gamma^{\zeta_k}\sum_{\tau_k}\tau_k^2 \mathbb{P}(\tau_k|\cog_k, a_k)  \right] \\
&=\sum_{\Delta}\left\{ \gamma^{\zeta_k} \tau_k^2 \mathbb{P}(\tau_k|\cog_k,a_k)  \mathbb{P}( \zeta_k, \cog_k |s_0, \pi) \right\} \\
&\overset{(2)^*}{=}\sum_{\Delta}\left\{ \gamma^{\zeta_k} \tau_k^2 \mathbb{P}(\tau_k, \cog_k , \zeta_k |s_0, \pi) \right\} \\
&=\sum_{k=0}^{n-1}\left\{ \sum_{\zeta_k} \sum_{\tau_k} \gamma^{\zeta_k}\tau_k^2 \mathbb{P}(\tau_k , \zeta_k|s_0, \pi)\right\} \\
&= \sum_{k=0}^{n-1}\E_{\pi} \left[ \gamma^{\zeta_k}\tau_k^2 \right],
\end{align*}
{where $(2)^*$ follows by using $\pi(s_k) = a_k$}.\\
We now establish the second statement. Let $G:= \E_{{\pi}}\left[\sum_{k=0}^{n-1}\gamma^{\zeta_k}\E[\tau_k|\cog_k,{a}_k]{q}_k \right]$, $\Delta_1 := \{k, \zeta_k, q_k, \\ \cog_k, \tau_k\}$, $\Delta_2 := \{k, \zeta_k, q_k, \tau_k\}$, and $\Delta_3 := \{ k, \zeta_k, \tau_k\}$. 
We have
\begin{align*}
G&=\sum_{k=0}^{n-1}\E_{{\pi}}\left[ \gamma^{\zeta_k}\E[\tau_k|\cog_k,{a}_k]{q}_k \right] \\
&=\sum_{k=0}^{n-1}\E_{{\pi}}\left[ \gamma^{\zeta_k}\sum_{\tau_k}\tau_k \mathbb{P}(\tau_k|\cog_k, a_k){q}_k \right] \\
&=\sum_{\Delta_1}\left\{ \gamma^{\zeta_k} \tau_k \mathbb{P}(\tau_k|\cog_k, a_k) {q}_k  \mathbb{P}(\cog_k,q_k,  \zeta_k | {s}_0, {\pi}) \right\} \\
&\overset{(3)^*}{=}\sum_{\Delta_1}\left\{ \gamma^{\zeta_k} \tau_k \mathbb{P}(\tau_k, \cog_k,q_k , \zeta_k| {s}_0, {\pi}) {q}_k \right\} \\
&=\sum_{\Delta_2}\left\{ \gamma^{\zeta_k} \tau_k \mathbb{P}(\tau_k, q_k , \zeta_k| {s}_0, {\pi}) {q}_k \right\} \\
&=\sum_{\Delta_2}\left\{ \gamma^{\zeta_k}\tau_k \mathbb{P}(\tau_k , \zeta_k| {s}_0, {\pi}) \mathbb{P}( q_k| \tau_k, \zeta_k, {s}_0, {\pi}){q}_k \right\} \\
&=\sum_{\Delta_3}\left\{ \gamma^{\zeta_k}\tau_k \mathbb{P}(\tau_k , \zeta_k| {s}_0, {\pi}) \sum_{q_k}{q}_k\mathbb{P}( q_k| \zeta_k, {s}_0, \pi) \right\} \\
&=\sum_{\Delta_3}\left\{ \gamma^{\zeta_k}\tau_k \mathbb{P}(\tau_k , \zeta_k| {s}_0, {\pi})\E_{\pi}\left[{q}_k| \zeta_k, {s}_0\right]\right\} \\
&=\sum_{k=0}^{n-1}\E_{\pi} \left[ \gamma^{\zeta_k}\tau_k \E_{\pi}\left[{q}_k| \zeta_k, {s}_0\right] \right],
\end{align*}
{where $(3)^*$ follows by using $\pi(s_k) = a_k$}.
\hfill \oprocend
\section{Proof of Lemma~\ref{lemma:value_function_bound} [\bit{Value function bounds}]}\label{app_value_function_bound}

Let $w_k$ be the number of tasks that arrive during stage $k\in \{0, \ldots, n-1\}$ with sojourn time $\tau_k$, in which the state transitions from $s_k=(q_k, \ \cog_k) \to s_{k+1}= (q_{k+1}, \ \cog_{k+1})$ and action $a_k$ is selected. Let $a_k$ be an optimal action at state $s_k$ and $\pi$ be the corresponding optimal policy such that $a_k=\pi(s_k)$. The optimal policy $\pi$ when applied from an initial state $s_0$ induces a sequence of states $<s_k>$ and  sojourn times $<\tau_k>$ or $\zeta_{k+1}$, where $\zeta_{k+1}={\sum_{j=0}^{k}\tau_j}$ and $\zeta_{0}=0$. \\
Similarly, let $\tilde{s}_0=(\tilde{q}_0,\cog_0)$ be another initial state with the same initial cognitive state, and $\tilde{q}_0 \ge q_0$. {Apply a policy $\tilde{\pi}$  from the initial state $\tilde{s}_0$ such that $\tilde{\pi}(\overline{q}, \overline{\cog})=\pi(\overline{q} + q_0-\tilde{q}_0, \overline{\cog})$ for any $(\overline{q}, \overline{\cog})$. Note that $\tilde{a}_0 = a_0.$  The optimal policy $\tilde{\pi}$ when applied from an initial state $\tilde{s}_0=(\tilde{q}_0,\cog_0)$ induces a sequence of realizations $<\tilde{s}_k>$ and $<\tilde{\tau}_k>$.
Since cognitive state and sojourn time are independent of the current queue length,   for the same action sequence applied from the initial states $s_0$ and $\tilde{s}_0$, the random process associated with the evolution of cognitive state and sojourn time is almost surely the same except for the offset in the queue length. 
Hence, the probability of observing a sequence of realizations $<\tilde{s}_k = (\tilde{q}_k , \cog_k)>$, $<\tilde{a}_k>$ and $<\tilde{\tau}_k>$ when policy $\tilde{\pi}$ is applied from $\tilde{s}_0$ is equal to the probability of observing a sequence of realizations $<{s}_k = ({q}_k, \cog_k)>$, $<a_k>$ and $<\tau_k>$ when policy ${\pi}$ is applied from ${s}_0$, where $\tilde{q}_k - \tilde{q}_0 = q_k-q_0$, $\tilde{a}_k = a_k$ and $\tilde{\tau}_k = \tau_k$. Therefore, it is easy to show that:}
\begin{equation}\label{eq: expectation_diff}
     \E_{\tilde{\pi}}[\tilde{q}_{k}| \tilde{s}_{0}, \zeta_k] -\E_{{\pi}}[q_{k}| s_{0}, \zeta_k]= \tilde{q}_0-q_0.
 \end{equation}
Note that the realization of sequence of actions $<a_k = \pi(s_k)>$, which are optimal for $<s_k=(q_k, \cog_k)>$ might be sub-optimal for $<\tilde{s}_k=(\tilde{q}_k, \cog_k)>$. Recall that $\E_{\pi}[\cdot]$ and $\E_{\tilde{\pi}}[\cdot]$ represents $\E[\cdot| s_0, \pi]$ and $\E_{\tilde{\pi}}[\cdot | \tilde{s}_0, \tilde{\pi}]$, respectively.
Let $\Delta q := \tilde{q}_0-q_0$, and $Z:={V_n^*}(q_0,\cog_0) - {V_n^*}(\tilde{q}_0,\cog_0)$. We first show the upper bound on $Z$. \\
$Z$ is upper-bounded by:
\begin{align}\label{eq:v_f_b}
 Z &\le {V_n^*}(q_0,\cog_0) - J_{n, \tilde{\pi}}(\tilde{q}_0,\cog_0)  \nonumber\\
&= \E_{\pi}\left[\sum_{k=0}^{n-1}\gamma^{\zeta_k}R(s_k,a_k)-\gamma^{\zeta_{n}}Cq_n \right] - \nonumber \\
& \quad \quad \quad \quad \quad \quad \E_{\tilde{\pi}}\left[\sum_{k=0}^{n-1}\gamma^{\zeta_k}R(\tilde{s}_k,\tilde{a}_k)-\gamma^{\zeta_{n}}C\tilde{q}_n \right] \nonumber \\
 &=\E_{{\pi}}\Bigg[\sum_{k=0}^{n-1}\gamma^{\zeta_k}\{r({s}_k,{a}_k)-c\E[\tau_k|{s}_k,{a}_k]{q}_k \nonumber\\ 
  &- \frac{c\lambda}{2}\E[\tau_k^2|{s}_k,{a}_k]\} -\gamma^{\zeta_{n}}C{q}_n \Bigg] -\nonumber \\
  &-\E_{\tilde{\pi}}\Bigg[\sum_{k=0}^{n-1}\gamma^{\zeta_k}\{r(\tilde{s}_k,\tilde{a}_k)-c\E[\tau_k|\tilde{s}_k,\tilde{a}_k]\tilde{q}_k \nonumber\\ 
  &- \frac{c\lambda}{2}\E[\tau_k^2|\tilde{s}_k,\tilde{a}_k]\} -\gamma^{\zeta_{n}}C\tilde{q}_n \Bigg] \nonumber \\
  &=\E_{{\pi}}\Bigg[\sum_{k=0}^{n-1}\gamma^{\zeta_k}\{r({a}_k)-c\E[\tau_k|\cog_k,{a}_k]{q}_k \nonumber\\ 
  &- \frac{c\lambda}{2}\E[\tau_k^2|\cog_k,{a}_k]\} -\gamma^{\zeta_{n}}C{q}_n \Bigg] \nonumber \\ 
  &-\E_{\tilde{\pi}}\Bigg[\sum_{k=0}^{n-1}\gamma^{\zeta_k}\{r(\tilde{a}_k)-c\E[\tau_k|\cog_k,\tilde{a}_k]\tilde{q}_k \nonumber\\ 
  &- \frac{c\lambda}{2}\E[\tau_k^2|\cog_k,\tilde{a}_k]\} -\gamma^{\zeta_{n}}C\tilde{q}_n \Bigg].
  \end{align}
 Using statements of Lemma~\ref{app:1}, RHS of \eqref{eq:v_f_b} can be written as:
\begin{align}\label{eq:v_f_q2}
&\sum_{k=0}^{n-1}\E_{{\pi}}[\gamma^{\zeta_k}r({a}_k)]  -c\sum_{k=0}^{n-1}\E_{{\pi}} \left[ \gamma^{\zeta_k}\tau_k \E_{\pi}\left[{q}_k|  {s}_0, \zeta_k \right] \right]
\nonumber \\
&- \frac{c\lambda}{2}\sum_{k=0}^{n-1}\E_{{\pi}} \left[ \gamma^{\zeta_k}\tau_k^2 \right]  -C\E_{{\pi}}\left[\gamma^{\zeta_{n}} \E_{\pi}\left[{q}_n|  {s}_0, \zeta_{n}\right] \right] \nonumber \\
&-\sum_{k=0}^{n-1}\E_{\tilde{\pi}}[\gamma^{\zeta_k}r(\tilde{a}_k)]  +c\sum_{k=0}^{n-1}\E_{\tilde{\pi}} \left[ \gamma^{\zeta_k}\tau_k \E_{\tilde{\pi}}\left[\tilde{q}_k|  \tilde{s}_0, \zeta_k \right] \right]
\nonumber \\
&+ \frac{c\lambda}{2}\sum_{k=0}^{n-1}\E_{\tilde{\pi}} \left[ \gamma^{\zeta_k}\tau_k^2 \right]  +C\E_{\tilde{\pi}}\left[\gamma^{\zeta_{n}} \E_{\tilde{\pi}}\left[\tilde{q}_n|  \tilde{s}_0, \zeta_{n}\right] \right] \nonumber \\
\overset{(4)^*}{=}&c\sum_{k=0}^{n-1}\E_{\pi}\left[\gamma^{\zeta_k}\tau_k \{\E_{\tilde{\pi}}[\tilde{q}_k \ | \tilde{s}_0, \zeta_k ]
-\E_{\pi}[q_k \ | s_0, \zeta_k ] \} \right] \nonumber \\
&+C\E_{\pi}\left[\gamma^{\zeta_{n}}\{\E_{\tilde{\pi}}[\tilde{q}_n \ | \tilde{s}_0, \zeta_k]-\E_{\pi}[q_n \ | s_0, \zeta_k ] \} \right],
\end{align}
where $(4)^*$ follows by recalling that the probability of observing a sequence of realizations $<\tilde{s}_k = (\tilde{q}_k , \cog_k)>$, $<\tilde{a}_k>$ and $<\tilde{\tau}_k>$ when policy $\tilde{\pi}$ is applied from $\tilde{s}_0$ is equal to the probability of observing a sequence of realizations $<{s}_k = ({q}_k, \cog_k)>$, $<a_k>$ and $<\tau_k>$ when policy ${\pi}$ is applied from ${s}_0$, where $\tilde{q}_k - \tilde{q}_0 = q_k-q_0$, $\tilde{a}_k = a_k$ and $\tilde{\tau}_k = \tau_k$.
Substituting~\eqref{eq: expectation_diff} in~\eqref{eq:v_f_q2}, we get,
\begin{align}\label{eq:v_f_u_b}
Z &\le \Bigg\{c\sum_{k=0}^{n-1}\E_{\pi}\left[\gamma^{\zeta_k}\tau_k\right] + C\E_{\pi}\left[\gamma^{\zeta_{n}}\right] \Bigg\}\Delta q    \nonumber\\
   &\overset{(5)^*}{=}\Bigg\{c\sum_{k=0}^{n-1}\E_{\pi}\left[\gamma^{\zeta_k}\right]\E_{\pi}\left[\tau_k \right] + C\E_{\pi}\left[\gamma^{\zeta_n}\right] \Bigg\}\Delta q   \nonumber\\
&\le   \left\{ct_{\max}\sum_{k=0}^{n-1}\rho^k  + C \rho^n \right\}\Delta q,
\end{align}
where $(5)^*$ follows due to independence of $\zeta_k=\sum_{i=0}^{k-1}\tau_k$ and $\tau_k$, $t_{\max}=\E[\tau|\cog=1,a=H]$ (assuming $\E[\tau | \cog=1, a=H] \ge \E[\tau | \cog=0, a=H]$ ), and $\rho =\max_{\cog, a}f(\E[\tau | \cog, a], \var(\tau | \cog,a))$ is an upper bound for $\E_{\pi}[\gamma^{\tau}]$ (see Assumption A3).
Taking the infinite time limit in~\eqref{eq:v_f_u_b}, we get, 
\begin{equation*}
    V^*(q_0,\cog_0) - V^*(\tilde{q}_0,\cog_0) \le \frac{ct_{\max}\Delta q}{1 - \rho}, \ \ \  \tilde{q}_0 \ge q_0.
\end{equation*}
We now show the lower bound on $Z$. Let $\tilde{a}_k$ be optimal for $\tilde{s}_k = (\tilde{q}_k, \cog_k)$, and choose $<{a}_k> = <\tilde{a}_k>$ for the sequence $<{s}_k = ({q}_k, \cog_k)>$, where $\tilde{s}_0=(\tilde{q}_0, \cog_0)$ and $s_0=(q_0, \cog_0)$. Note that \eqref{eq: expectation_diff} still holds. Analogous to~\eqref{eq:v_f_q2}, $Z$ is lower-bounded by:
\begin{align}\label{eq:v_f_b_lower}
Z &\ge J_{n, {\pi}}({q}_0,\cog_0)- {V_n^*}(\tilde{q}_0,\cog_0)  \nonumber\\
=c &\sum_{k=0}^{n-1}\E_{\tilde{\pi}}\left[\gamma^{\tilde{\zeta}_{k}}\tilde{\tau}_k \left\{\E_{\tilde{\pi}}[\tilde{q}_k  | \tilde{s}_0, \tilde{\zeta}_{k} ]
-\E_{\pi}[q_k | s_0, \tilde{\zeta}_{k} ] \right\} \right] \nonumber \\
&+C\E_{\tilde{\pi}}\left[\gamma^{\tilde{\zeta}_{n}}\left\{\E_{\tilde{\pi}}[\tilde{q}_n | \tilde{s}_0, \tilde{\zeta}_{k}]-\E_{\pi}[q_n | s_0, \tilde{\zeta}_{k} ] \right\}\right]. \nonumber\\ 
\end{align}
Substituting~\eqref{eq: expectation_diff} in~\eqref{eq:v_f_b_lower}, we get,
\begin{align}\label{eq:v_f_u_b_lower}
Z & \ge \left\{c\sum_{k=0}^{n-1}\E_{\tilde{\pi}}\left[\gamma^{\tilde{\zeta}_{k}}\right]\E_{\tilde{\pi}}\left[\tilde{\tau}_k \right] + C\E_{\tilde{\pi}}\left[\gamma^{\tilde{\zeta}_{n}}\right] \right\}\Delta q  \nonumber\\
  &\overset{(6)^*}{\ge}  \left\{ct_s\sum_{k=0}^{n-1}\gamma^{\E_{\tilde{\pi}}\left[\tilde{\zeta}_{k}\right]} + C\gamma^{\E_{\tilde{\pi}}\left[\tilde{\zeta}_{n}\right]} \right\}\Delta q  \nonumber\\
  &\ge \left\{\frac{(1-\gamma^{nt_{\max}})ct_s}{1-\gamma^{t_{\max}}}  + \gamma^{nt_{\max}}C \right\}\Delta q,
\end{align}
where $(6)^*$ follows by applying Jensen’s inequality~\cite{dekking2005modern} ($\E[g(x)] \ge g(\E[x])$) on the convex function $g(x)= \gamma^x$. Taking the infinite time limit in~\eqref{eq:v_f_u_b_lower}, we get, 
\begin{equation*} 
 0 \le \frac{ct_s\Delta q}{1-\gamma^{t_{\max}}} \le    V^*(q_0,\cog_0) - V^*(\tilde{q}_0,\cog_0) , \ \ \  \tilde{q}_0 \ge q_0.
\end{equation*}
\hfill \oprocend
\section{Proof of Lemma~\ref{lemma:threshold} [\bit{Thresholds for low cognitive states}]}\label{app_threshold1} 
 \begin{proof}
Recall that $\mc A_s := \{\{$\textit{S}, \textit{N}, \textit{H} \}$| \ s \in \mc{S}, \ q\neq 0 \}$ when queue is non-empty and $\cog \leq \cog^*$. We start by proving the first statement. In the following, we find conditions under which if action \textit{N} is the optimal choice at queue length $q_1$ for a given cognitive state $\cog \le \cog^*$, then for all $q_2>q_1$, \textit{N} dominates \textit{H}. Let \textit{N} be the optimal action in state $s_1=\{q_1,\cog\}$. Let $F(s,a)$ denote the expected future rewards  received in state $s$ for taking action $a$ (the second term in the Bellman equation~\eqref{eq:2}).
\resizebox{0.99\linewidth}{!}{
  \begin{minipage}{\linewidth}
\begin{equation}
  F(s,a) = \sum_{\cog',q',\tau}\gamma^{\tau}\mathbb{P}(q'|\tau,q)V^{*}(\cog',q') \mathbb{P}(\cog', \tau|\cog,a). \label{eq:10}  
\end{equation}
\end{minipage}
}
Then, we have
\begin{multline}{\label{eq:11new}}
    R(s_1,N)-R(s_1,H) + F(s_1,N)-F(s_1,H)>0, \\
     \implies M +
    \sum_{\tau}\sum_{\cog'}\sum_{\overline{q}}\gamma^{\tau}\texttt{Pois}(\overline{q}|\tau)V^*(\cog',q_1+\overline{q}-1) \times \\
    (\mathbb{P}(\cog',\tau | \cog, N)-\mathbb{P}(\cog',\tau | \cog, H)) > 0,
\end{multline}
where $M:=c(\mathbb{E}[\tau|\cog,H]-\mathbb{E}[\tau|\cog,N])q_1  + r_N-r_H +    \frac{c\lambda}{2}(\mathbb{E}[\tau^2|\cog,H]-\mathbb{E}[\tau^2|\cog,N])$, and $\mathbb{P}(q_1+\overline{q}-1|q_1, \tau)$ is replaced by $\texttt{Pois}(\overline{q}| \tau)$, which is the Poisson probability of $\overline{q}$ arrivals during service time $\tau$.\\
Now for the state $s_2=\{q_2,\cog\}$, with $q_2 > q_1$ and identical cognitive state $\cog$, under the assumption {(A1)} we show that:
\begin{equation}{\label{eq:12new}}
    R(s_2,N)-R(s_2,H) + F(s_2,N)-F(s_2,H)>0.
\end{equation}
The left-hand side of \eqref{eq:12new} is given by:
\begin{multline}{\label{eq:13_new}}
    X+ M +
    \sum_{\tau}\sum_{\cog'}\sum_{\overline{q}}\gamma^{\tau}\texttt{Pois}(\overline{q}|,\tau)V^*(\cog',q_2+\overline{q}-1) \times \\
    (\mathbb{P}(\cog',\tau | \cog, N)-\mathbb{P}(\cog',\tau | \cog, H)),
\end{multline}
 where $X:=c(\mathbb{E}[\tau|\cog,H]-\mathbb{E}[\tau|\cog,N])(q_2-q_1)$. To show \eqref{eq:12new}, we prove that the difference between LHS of \eqref{eq:12new} and \eqref{eq:11new} is positive. Subtracting  LHS of \eqref{eq:11new} from \eqref{eq:13_new}, we get:
\begin{multline}{\label{eq:13nnn}}
X -
\sum_{\tau}\sum_{\cog'}\sum_{\overline{q}}\gamma^{\tau}\texttt{Pois}(\overline{q}|\tau) V_D \times \\
(\mathbb{P}(\cog',\tau | \cog, N)-\mathbb{P}(\cog',\tau | \cog, H)),
\end{multline}
where $V_D:=\Big[V^*(\cog',q_1+\overline{q}-1)
- V^*(\cog',q_2+\overline{q}-1)\Big]$.
From Lemma~\ref{lemma:value_function_bound}, we know that
\begin{equation}\label{eq_manew}
   0 \le \beta := \frac{ct_s(q_2-q_1)}{1-\gamma^{t_{\max}}} \le V_D \le \frac{ct_{\max}(q_2-q_1)}{1 - \rho} =: \alpha.
\end{equation}
Therefore, \eqref{eq:13nnn} is lower bounded by
\begin{multline}{\label{eq:144new}}
X +
\beta\sum_{\tau}\sum_{\cog'}\sum_{\overline{q}}\gamma^{\tau}\texttt{Pois}(\overline{q}|\tau)\mathbb{P}(\cog',\tau | \cog, H) \\ -\alpha\sum_{\tau}\sum_{\cog'}\sum_{\overline{q}}\gamma^{\tau}\texttt{Pois}(\overline{q}|\tau)\mathbb{P}(\cog',\tau | \cog, N)  \\
\ge X + \beta\gamma^{\E[\tau|\cog,H]} - \alpha\E[\gamma^{\tau}| \cog , N],
\end{multline}
where we utilized Jensen's inequality on convex function $\gamma^{\tau}$ to obtain $ \E[\gamma^\tau| \cog, H] \ge \gamma^{\E[\tau|\cog,H]}$. \eqref{eq:144new} is non negative for
\begin{multline}{\label{eq:164new}}
\mathbb{E}[\tau|\cog,H]-\mathbb{E}[\tau|\cog,N] + \frac{t_s\gamma^{\E[\tau|\cog,H]}}{1-\gamma^{t_{\max}}} \\ \ge \frac{t_{\max}}{1 - \rho}\E[\gamma^{\tau}| \cog , N],
\end{multline}
where $t_{\max} =\E[\tau | \cog=1, a=H] $ is the maximum expected sojourn time, and $t_s$ is the constant time for skip.\\
Now we prove the second statement. Using a similar analysis it can be shown that if action \textit{S} is the optimal choice at queue length $q_1$ for a given cognitive state $\cog \le \cog^*$, then for every $q_2>q_1$, \textit{S} dominates \textit{H} and \textit{N} under the following conditions:
\begin{multline}{\label{eq:165new}}
\mathbb{E}[\tau|\cog,H]-t_s+ \frac{t_{s}\gamma^{\E[\tau|\cog,H]}}{1-\gamma^{t_{\max}}} \ge \gamma^{t_s}\frac{t_{\max}}{1 - \rho},
\end{multline}
\begin{multline}{\label{eq:166new}}
\mathbb{E}[\tau|\cog,N]-t_s + \frac{t_{s}\gamma^{\E[\tau|\cog,N]}}{1-\gamma^{t_{\max}}} \ge \gamma^{t_s}\frac{t_{\max}}{1 - \rho},
\end{multline}
respectively, where we have used $\mathbb{E}[\tau|\cog,S] =t_s$ and $\E[\gamma^{\tau}| \cog , S]=\gamma^{t_s}$ due to constant time of skip. Since $\mathbb{E}[\tau|\cog,H] > \mathbb{E}[\tau|\cog,N]$, 
\eqref{eq:165new}-\eqref{eq:166new} can be combined to obtain the condition:
\begin{multline}{\label{eq:167new}}
\mathbb{E}[\tau|\cog,N]-t_s + \frac{t_s\gamma^{\E[\tau|\cog,H]}}{1-\gamma^{t_{\max}}} \ge \gamma^{t_s}\frac{t_{\max}}{1 - \rho},
\end{multline}
under which action \textit{S} dominates both \textit{H} and \textit{N}.
\end{proof}

\begin{wrapfigure}{l}{1in}
\includegraphics[width=1in,height=1.25in,clip,keepaspectratio]{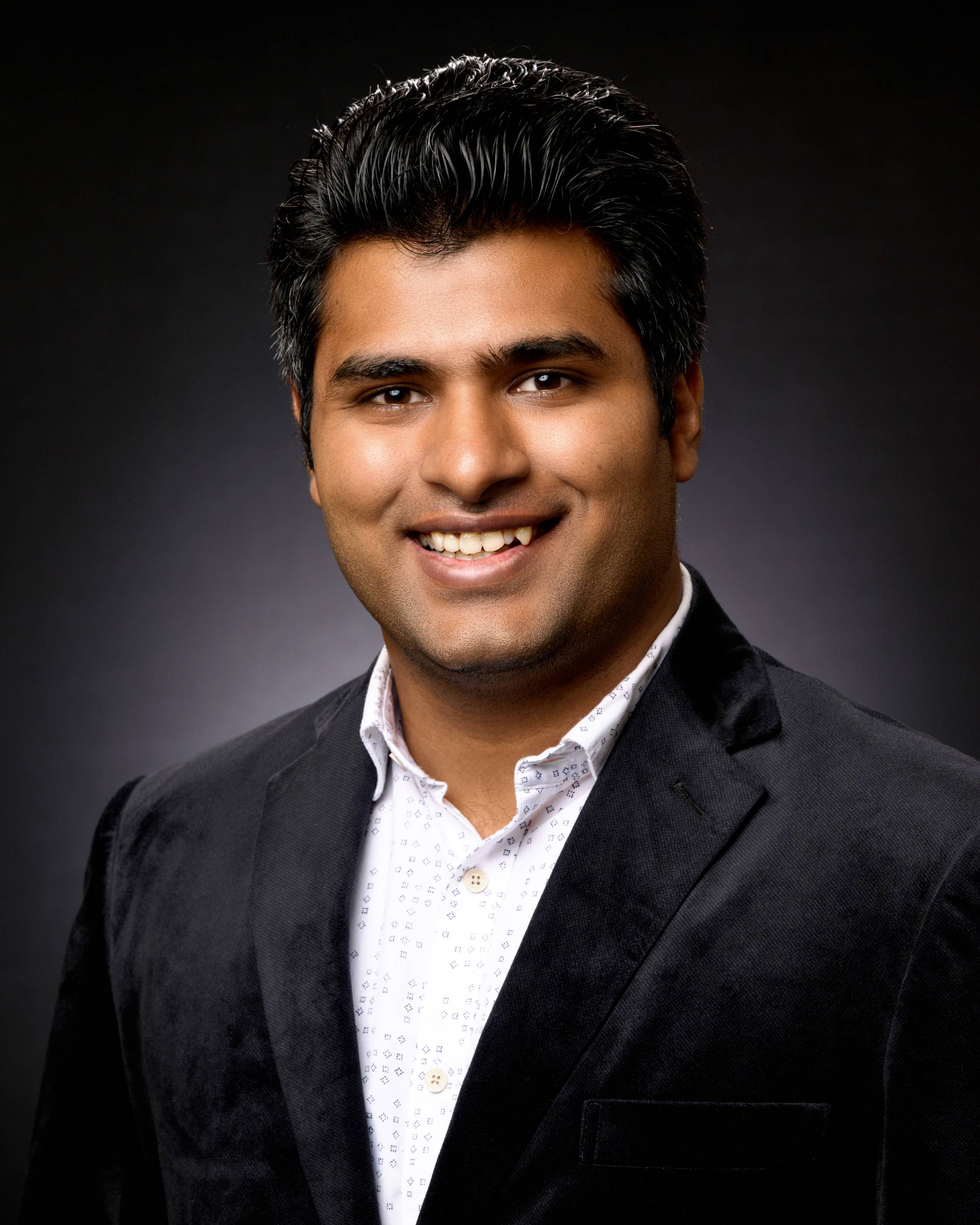}
\end{wrapfigure}

\noindent
\textbf{Piyush Gupta} is currently a Ph.D. candidate in the Department of Electrical and Computer Engineering at Michigan State University. He earned his B.Tech. degree in Mechanical Engineering from the Indian Institute of Technology, Delhi, India, in 2015. During the years 2015 to 2017, he served as an R\&D Engineer at Honda R\&D Co. Ltd., Japan. Subsequently, in 2020, he completed his M.S. degree in Electrical and Computer Engineering at Michigan State University.
His research interests encompass a variety of areas, including human-in-the-loop systems, motion planning and prediction for autonomous vehicles, and machine learning algorithms.

\begin{wrapfigure}{l}{1in}
  \includegraphics[width=1in,height=1.25in,clip,keepaspectratio]{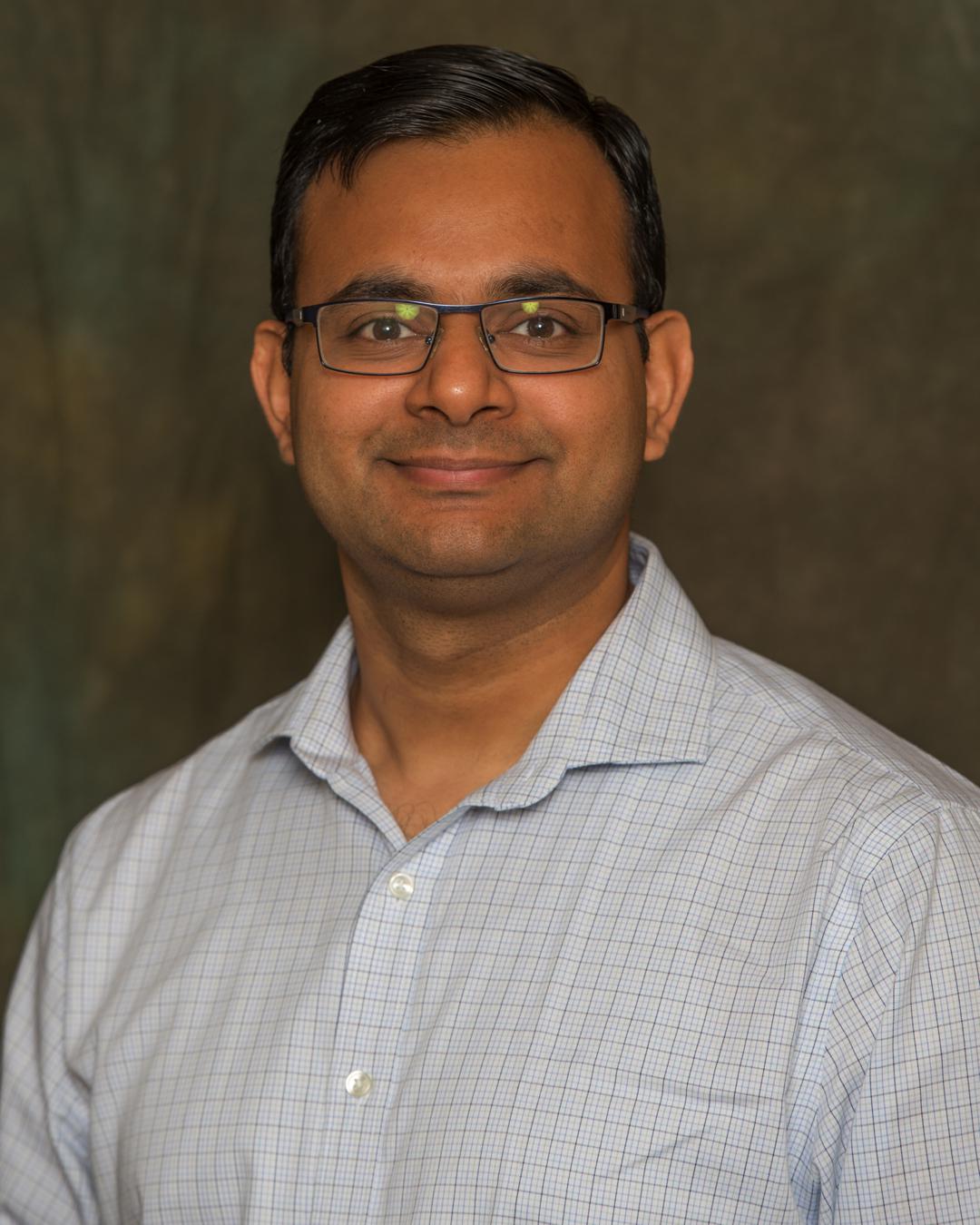}
\end{wrapfigure}

\noindent
\textbf{Vaibhav Srivastava} received the B.Tech. degree (2007) in mechanical engineering from the Indian Institute of Technology Bombay, Mumbai, India; the M.S. degree in mechanical engineering (2011), the M.A. degree in statistics (2012), and the Ph.D. degree in mechanical engineering (2012) from the University of California at Santa Barbara, Santa Barbara, CA.\\
Dr. Srivastava is currently an Associate Professor of Electrical and Computer Engineering at Michigan State University. He is also affiliated with Mechanical Engineering, Cognitive Science Program, and Connected and Autonomous Networked Vehicles for Active Safety (CANVAS). He served as a Lecturer and Associate Research Scholar with the Mechanical and Aerospace Engineering Department at Princeton University, Princeton, NJ from 2013-2016. 
His research focuses on Cyber Physical Human Systems with an emphasis on mixed human-robot systems and networked multi-agent systems. 

\end{document}